\newtheorem{theorem}{Theorem}[section]
\newtheorem{lemma}[theorem]{Lemma}
\DeclareMathOperator*{\argmin}{arg\,min}
\begin{document}

\title[Signal-Aware Workload Shifting Algorithms with UQ Predictors]{Signal-Aware Workload Shifting Algorithms \\with Uncertainty-Quantified Predictors}

\newcommand{\sref}[2]{\hyperref[#2]{#1 \ref{#2}}}

\renewcommand{\sectionautorefname}{Section}
\renewcommand{\subsectionautorefname}{Section}
\renewcommand{\subsubsectionautorefname}{Section}
\renewcommand{\figureautorefname}{Figure}
\def\algorithmautorefname{Algorithm}
\def\thmautorefname{Theorem}
\def\claimautorefname{Claim}
\def\lemautorefname{Lemma}
\def\appendixautorefname{Appendix}

\newcommand{\OCS}{\texttt{OCS}\xspace}
\newcommand{\OCU}{\texttt{SASP}\xspace}
\newcommand{\ROCU}{\texttt{SASP}\xspace}
\newcommand{\SWAP}{\texttt{SASP}\xspace}
\newcommand{\SASP}{\texttt{SASP}\xspace}
\newcommand{\ROAdv}{\texttt{RO-Advice}\xspace}
\newcommand{\UQAdv}{\texttt{UQ-Advice}\xspace}
\newcommand{\ALG}{\texttt{ALG}\xspace}
\newcommand{\OPT}{\texttt{OPT}\xspace}
\newcommand{\ADV}{\texttt{ADV}\xspace}
\newcommand{\DUS}{\texttt{DUS}\xspace}
\newcommand{\RORO}{\texttt{RORO}\xspace}
\newcommand{\Cost}{\text{Cost}\xspace}

\renewcommand{\shortauthors}{Johnson et al.}

\author{Ezra Johnson}
\email{ezrajohnson@caltech.edu}
\affiliation{%
  \institution{California Institute of Technology}
  \city{Pasadena}
  \state{California}
  \country{USA}
}

\author{Adam Lechowicz}
\email{alechowicz@cs.umass.edu}
\affiliation{%
  \institution{University of Massachusetts Amherst}
  \city{Amherst}
  \state{Massachusetts}
  \country{USA}
}

\author{Mohammad Hajiesmaili}
\email{hajiesmaili@cs.umass.edu}
\affiliation{%
  \institution{University of Massachusetts Amherst}
  \city{Amherst}
  \state{Massachusetts}
  \country{USA}
}

\begin{abstract}

A wide range of sustainability and grid-integration strategies depend on workload shifting, which aligns the timing of energy consumption with external signals such as grid curtailment events, carbon intensity, or time-of-use electricity prices. The main challenge lies in the online nature of the problem: operators must make real-time decisions (e.g., whether to consume energy now) without knowledge of the future. While forecasts of signal values are typically available, prior work on learning-augmented online algorithms has relied almost exclusively on simple point forecasts. In parallel, the forecasting research has made significant progress in uncertainty quantification (UQ), which provides richer and more fine-grained predictive information.
In this paper, we study how online workload shifting can leverage UQ predictors to improve decision-making. We introduce \texttt{UQ-Advice}, a learning-augmented algorithm that systematically integrates UQ forecasts through a \textit{decision uncertainty score} that measures how forecast uncertainty affects optimal future decisions. By introducing \textit{UQ-robustness}, a new metric that characterizes how performance degrades with forecast uncertainty, we establish theoretical performance guarantees for \texttt{UQ-Advice}. Finally, using trace-driven experiments on carbon intensity and electricity price data, we demonstrate that \texttt{UQ-Advice} consistently outperforms robust baselines and existing learning-augmented methods that ignore uncertainty.

\end{abstract}

\maketitle

\section{Introduction}
\label{sec:intro}

The urgency of climate change and the growing need for deep integration between the power grid and energy-intensive infrastructures such as data centers have spurred growing interest in rethinking how these systems operate, with a focus on making them adaptive to external signals.
One of the most promising approaches to achieve such a goal is \textit{workload shifting}, where the timing of computational and energy demand is aligned with external signals like grid curtailment events (to take advantage of surplus energy supply), carbon intensity (to reduce carbon footprint), or time-of-use electricity prices (to reduce operational expenses).  A wide variety of energy-consuming tasks accommodate workload shifting, including batch jobs in data centers~\cite{radovanovic2022carbon, Wiesner:21, Kim:2023, Dodge:2022}, electric vehicle charging at home or work~\cite{Cheng:22:EV, Wang:25:EV}, and HVAC scheduling~\cite{Kurte:23:HVAC, Yang:25:HVAC}, among others.  While workload shifting exhibits great potential for savings in these applications \cite{sukprasert2023quantifying, Bovornkeeratiroj:23}, a key challenge is the online nature of the problem: scheduling decisions must be made sequentially, usually without knowledge of future external signal values.

In practice, reasonably accurate forecasts of the relevant signals can often be obtained. For example, recent advances have improved the forecasting of carbon intensity values~\cite{Maji:22:CC, Zhang:23, Yan:25}, and energy price forecasting has long been an established area of study~\cite{Amjady:06, Pindoriya:08, Patel:21}. The algorithmic literature of workload shifting also studied leveraging such predictions in decision making~\cite{Lechowicz:24, Hanafy:23:CarbonScaler}. This literature, however, almost exclusively focused on \textit{point predictions} that directly predict future signal values.
This limits their applicability since point predictions provide no systematic way to quantify uncertainty, leaving algorithms themselves to determine how much trust to place in forecasts, a challenge highlighted in the literature on algorithms with predictions~\cite{Sun:24}.
More recently, however, methods such as conformal prediction have emerged to provide high-quality uncertainty quantification (UQ) for these forecasts~\cite{Li:24, Alghumayjan:24}. Unlike point predictions, UQ methods produce not only a predicted value but also an associated confidence measure, often in the form of an interval (e.g., upper and lower bounds) that is guaranteed to contain the true signal value with a specified probability, e.g., 95\%.  Incorporating such fine-grained uncertainty information can improve algorithmic decision-making and remove the need for manual adjustments of the user-specified ``trust'' in machine-learned forecasts seen in prior work~\cite{Lechowicz:24}.

Given the availability of these richer, uncertainty-aware forecasts, a natural question is how this type of forecast could enable more robust and performant online decision-making for signal-aware workload shifting. In this paper, we follow the paradigm of learning-augmented online algorithms~\cite{Lykouris:18, Purohit:18}, which provide a framework to systematically incorporate and analyze the performance of algorithms that combine the ``best-of-both-worlds:'' a robust baseline with machine-learned advice.  A few recent works~\cite{Sun:24, Shen:25, Bostandoost:24, Daneshvaramoli:25, Zhao:17, Anand:22} have begun to consider incorporating uncertainty-quantified predictions in canonical online problems such as ski rental and online search, demonstrating the benefits of these richer prediction models. However, the application of these ideas to more complex, practical problems like workload shifting remains underexplored.  A key analytical challenge is that incorporating uncertainty-quantified predictors in algorithm design is already known to be challenging for simple stylized problems (e.g., ski rental~\cite{Sun:24}).  Extending these ideas to a setting such as workload shifting, where UQ predictors are multi-stage (i.e., multi-time-step) in nature and the underlying problem is more complex, is quite non-trivial.
Similarly, learning-augmented algorithms have been considered for online problems related to workload shifting, such as one-way trading~\cite{SunLee:21, Lee:24, Lechowicz:24, Christianson:22}, but these all consider point prediction models that are not \textit{uncertainty-aware}.

\smallskip
\noindent\textbf{Contributions.}
This paper addresses the problem of online workload shifting with uncertainty-quantified predictors through the lens of the learning-augmented online algorithms framework. We make three main contributions, each corresponding to a central research question.

\noindent$\blacktriangleright$ \textit{How can an online algorithm for workload shifting systematically incorporate UQ signal forecasts to balance trust in the prediction against worst-case performance automatically?}
To address this question, we investigate how UQ predictions can be effectively leveraged in workload shifting, modeling the problem as a variant of the online conversion with switching costs (\OCS) framework. \OCS captures key trade-offs in workload shifting applications such as batch job scheduling in data centers and electric vehicle charging. Building on this foundation, we propose \UQAdv, a novel learning-augmented online algorithm that pre-computes uncertainty-driven scenarios to determine when and how to rely on UQ forecasts. \UQAdv dynamically interpolates between a prediction-based solution and a classical competitive online algorithm. Our analysis develops a rich theoretical model of UQ forecasts, inspired by state-of-the-art empirical techniques such as conformal prediction, thereby tailoring the approach to practical forecasting methods.

\noindent$\blacktriangleright$ \textit{Can such an algorithm achieve theoretical bounds on its performance that gracefully degrade as the quantified uncertainty in the forecast increases?} 
We provide a rigorous theoretical analysis of our algorithm, establishing guarantees on its \textit{consistency} and \textit{robustness}. We also introduce and analyze a new performance metric, \textit{uncertainty-quantified robustness}, which characterizes the algorithm's performance when the uncertainty quantification is correct but forecasts are not exact.
Our result for this metric implies that \UQAdv is $1$-competitive (i.e., optimal) when uncertainty is low, with performance that smoothly degrades as a function of increasing uncertainty (see \autoref{thm:uq-uqrobustness} for a formal statement).
To our knowledge, our work is the first to consider complex multi-time-step UQ predictions in the design of online algorithms.

\noindent$\blacktriangleright$ \textit{How much and in which cases does an algorithm leveraging UQ signal forecasts improve workload shifting performance in practice?} Through extensive trace-driven experiments using carbon intensity and energy price as signals, we demonstrate that \UQAdv significantly improves on existing learning-augmented techniques that do not utilize information about uncertainty, improving on baselines by up to 12.6\% on average and up to 26.15\% in the 95$^\text{th}$ percentile, all while requiring less fine-tuning than existing approaches.  In this respect, we highlight \autoref{fig:uncertainty}, which shows that \UQAdv effectively ``matches'' the performance of the best algorithm in all scenarios -- relying on forecasts when uncertainty is low and matching a robust baseline when forecasts are poor.

\section{Problem and Preliminaries}
\label{sec:prob}

\noindent In this section, we formalize the online signal-aware workload shifting problem and present preliminaries and assumptions used in the paper.  Throughout the paper, lowercase bold letters denote vectors -- e.g., for a vector of length $T$ we have $\mathbf{x} \coloneqq \{ x_t \}_{t \in [T]}$.

\subsection{Problem Formulation}
We introduce the \textbf{s}ignal-\textbf{a}ware workload \textbf{s}hifting \textbf{p}roblem (\SASP), where the decision-maker's goal is to complete a workload of unit size (without loss of generality) before a given deadline $T$ while minimizing their total cost.  At each time step $t \in [T]$, a price signal $p_t$ arrives online, and the decision-maker must choose the amount of the workload to complete at the current time step, represented by $x_t$.  $x_t$ is $\geq 0$ and $\leq d_t$, where $d_t \leq 1$ represents a \textit{rate constraint} that potentially constrains the decision-maker from e.g., completing the entire workload in a single step.
Given a decision $x_t$, the decision-maker's per-step cost is $p_t x_t + \beta \vert x_t - x_{t-1} \vert + \lambda x_t^2$, where the first term is the cost due to the time-varying signal (e.g., carbon emissions, energy procurement cost), the second term penalizes the decision-maker for ``unsmooth jumps'' across time steps, and the third term is a (typically small) regularization term that encourages spreading resource usage across multiple time steps.  The coefficients $\beta > 0$ and $\lambda > 0$ are both known apriori.  An offline formulation of \SASP is as follows:
\begin{align}
    \SASP: \min_{\mathbf{x} \coloneqq \{x_t\}_{t\in T}} & \underbrace{\sum_{t=1}^{T} p_t x_t}_{\text{total signal cost}} + \underbrace{\sum_{t=1}^{T+1} \beta \vert x_t - x_{t-1} \vert,}_{\text{switching penalty}} + \underbrace{\lambda \sum_{t=1}^T x_t^2}_{\text{regularizer}} \label{eq:obj} \\
    \text{s.t.} & \underbrace{\sum_{t=1}^{T} x_t = 1,}_{\text{deadline constraint}} \quad  x_t \in [0, d_t] \ \forall t \in [T].
\end{align}

In this work, we focus on designing algorithms for the \textit{online} version of the \SASP problem, where the decision-maker can only observe price signals up to time $t$ when selecting $x_t$, and each choice of $x_t$ is irrevocable (i.e., it cannot be revised at future time steps). A closely related problem, called \textit{online conversion with switching costs} (\OCS), was proposed in~\cite{Lechowicz:24}, building on a long line of work in online search problems with similar structure~\cite{ElYaniv:01, Zhou:08}. Compared to \OCS, our online formulation of \SASP introduces two key extensions: (i) the addition of a regularization term $\lambda \sum_{t=1}^T x_t^2 = \lambda \Vert \mathbf{x} \Vert_2^2$, which facilitates analysis of the problem (see \autoref{sec:leverage})\footnote{Adding a small regularization term is a standard technique to ensure strong convexity of the objective in \eqref{eq:obj}, which we exploit in our theoretical results.}, and (ii) the incorporation of uncertainty-quantified forecasts, as detailed below.  

Although the problem is abstractly framed in terms of ``prices,'' the time-varying signal cost minimized in \SASP naturally models a broad range of applications. Examples include carbon emissions minimization (where the signal is time-varying grid carbon intensity) and electricity cost minimization (where the signal is time-of-use energy pricing), among others.  

In the UQ setting, we focus on a practically motivated scenario in which the decision-maker has advance access to \textit{uncertainty-quantified forecasts} of the price signal. In the following, we formally define the prediction model adopted in our analysis.  

\begin{definition}[The UQ forecast model]\label{dfn:uq-model}
The decision-maker is provided a \textit{point forecast} of the price signal over the next $T$ time slots, represented by $\{ \hat{p}_t \}_{t \in [T]}$, where an exactly correct forecast implies $\hat{p}_t = p_t$. 

In addition to this point forecast, the decision-maker also observes an \textit{uncertainty box set} for each time step, represented by $\{ \ell_t \}_{t \in [T]}$ and $\{ u_t \}_{t \in [T]}$, where $\ell_t \leq \hat{p}_t \leq u_t$.  Moreover, a coverage parameter $\delta$ guarantees that the true value $p_t$ lies within the interval $[\ell_t, u_t]$ with probability $1-\delta$.  The parameter $\delta$ is usually known and fixed apriori, and is sometimes referred to as the coverage~\cite{Alghumayjan:24}.  Intuitively, as the width of the interval $[\ell_t, u_t]$ increases, there is more uncertainty in the forecast's estimate.
\end{definition}

The prediction model defined in \sref{Definition}{dfn:uq-model} is primarily inspired by empirical work that proposes methods for time-series forecasting in relevant domains.
We particularly highlight that our model considers \textit{both} point forecasts and uncertainty-quantified sets for each signal value, which deviates from prior work on uncertainty-quantified predictions in online algorithms~\cite{Sun:24, Daneshvaramoli:25}, but aligns more closely with recent works proposing e.g., conformal prediction methods for carbon intensity forecasting and energy markets~\cite{Li:24, Alghumayjan:24}.  

\subsection{Performance Metrics}

We study \SASP within the framework of online algorithms, with a particular focus on learning-augmented algorithm design. In the classical setting of online algorithms, the objective is to design an algorithm that achieves a small \textit{competitive ratio}~\cite{Borodin:92}, defined as:

\begin{definition}[Competitive Ratio] 
Let $\OPT(\mathcal{I})$ denote the offline optimal solution on an arbitrary \SASP input $\mathcal{I}$ and $\ALG(\mathcal{I})$ represent the solution of an online algorithm \ALG on the same input.
Given a cost function $\text{Cost}(\cdot)$, we say that \ALG is \textbf{$\alpha$-competitive} iff the following holds: $\text{Cost}(\ALG(\mathcal{I})) \leq \alpha \text{Cost}(\OPT(\mathcal{I})) \ \forall \mathcal{I} \in \Omega$, where $\Omega$ is the universe of all possible \SASP inputs.  Note that $\alpha$ is greater than or equal to $1$, and the smaller it is, the closer the algorithm is to the optimal solution.  
\end{definition}
In the recent literature on learning-augmented algorithms~\cite{Lykouris:18, Purohit:18}, the competitive ratio metric is interpreted via the notions of consistency and robustness, defined as follows:
\begin{definition}[Consistency and Robustness]
A learning-augmented algorithm \ALG is $\eta$-consistent if it is \textbf{$\eta$-competitive} when it is given an accurate (exactly correct) prediction and \textbf{$\alpha$-robust} if it is $\alpha$-competitive regardless of the prediction's quality.
\end{definition}
These quantities respectively measure how close an algorithm's solution is to the offline optimal when the prediction is exactly correct, and bound how far the algorithm's solution strays from the offline optimal when the prediction is arbitrarily wrong.

In our setting, leveraging the expressiveness of UQ predictions (see \sref{Definition}{dfn:uq-model}), we additionally consider a new metric of \textit{UQ-robustness}. The purpose of introducing this model is to capture the case where the prediction is not exactly correct (e.g., the point forecasts do not match the true signal values) but the input instance is still covered by the uncertainty set (i.e., with probability $1-\delta$).  It is defined as follows:
\begin{definition}[UQ-Robustness]
A learning-augmented algorithm \ALG is \textbf{$\gamma$-UQ-robust} if, given any set of uncertainty-quantified forecasts, it is at most $\gamma$-competitive on the entire subset of \SASP inputs that lie \textbf{within} the uncertainty sets defined by $\{ \ell_t \}_{t \in [T]}$ and $\{ u_t \}_{t \in [T]}$.
\end{definition}

\subsection{Assumptions}
We place the following assumptions on the \SASP problem throughout the paper.
Following prior work, we assume that signal values have bounded support, i.e., $p_t \in [p_{\min}, p_{\max}] \ \forall t \in [T]$, where $p_{\min}$ and $p_{\max}$ are known positive constants.  This has been shown to be a necessary assumption for any competitive algorithm for the online conversion problem and other related online problems~\cite{ElYaniv:01, Lorenz:08, Lechowicz:24, SunZeynali:20}, and intuitively holds in practice since the signals of interest for workload shifting (e.g., price signals, carbon intensity) are bounded.

We also assume that the switching coefficient $\beta$ and regularization coefficient $\lambda$ are ``not too large''. Formally, both are bounded within $\beta \in \left[ 0, \frac{(p_{\max} - p_{\min})}{2} \right)$ and $\lambda \in [0, p_{\max} - p_{\min})$, respectively.  If either $\beta$ or $\lambda$ exceeds these ranges, their impact on the total cost exceeds that of the signal value, and any competitive algorithm should simply minimize these terms, making decision-making trivial~\cite{Lechowicz:24}.

We typically assume that the deadline $T$ is known in advance to facilitate both the uncertainty-quantified forecasts and a ``compulsory trade'' that ensures the entire workload is completed before the end of the sequence.   If the decision-maker has completed has completed $w_j$ fraction of the workload at time $j$, this compulsory trade begins whenever $\sum_{t=j+1}^T d_t < 1 - w^{(j)}$ (i.e., when the remaining time steps is barely enough to complete the workload). During this compulsory execution, a signal-agnostic algorithm takes over and makes maximal decisions $x_t = d_t$ for the remaining time steps.  If future values of $d_t$ are unknown, we assume the decision-maker is still given a signal to indicate that the deadline is coming up and that the compulsory trade should begin.

\section{Leveraging Uncertainty Estimates}
\label{sec:leverage}

\noindent In this section, we explore exactly how uncertainty-quantified forecasts can help inform decisions for the \OCU problem. We first note that while prior work has explored uncertainty-quantified predictions in the learning-augmented setting~\cite{Hajiesmaili:16, Daneshvaramoli:25, Sun:24}, the uncertainty has typically been limited to a single feature of the input.

For example, \citet{Sun:24} consider the ski rental problem, in which a decision-maker aims to ski for an unknown time horizon $N$. Each day the decision-maker must decide whether to rent skis at a cost of $\$1$ per day, or buy the skis at a cost of $\$B$ and ski for free from that point forwards. The goal is to minimize the cost of buying and renting skis.  In this problem, if the time horizon $N$ is known apriori, the optimal solution is simple -- always rent if $N < B$, and buy otherwise.  Therefore, in the \textit{uncertainty-quantified} regime, the decision-maker receives a prediction about the number of skiing days $N$, and a box uncertainty set (lower and upper bounds) for the true number of skiing days.

However, for more complicated multi-stage online problems, including canonical problems like metrical task systems and $k$-server, learning-augmented algorithms typically consider more expressive predictions~\cite{Antoniadis:20MTS, Christianson:23MTS, Anand:22, Antoniadis:23, Cho:22, Li:24:decentralized}.  For instance, learning-augmented algorithms for online conversion with switching costs use \textit{black-box decision advice}~\cite{Lechowicz:24}, where the decision-maker directly receives predictions of the optimal decision to make at each time step.  Given a forecast of the price signal, such predictions are easy to obtain by solving an offline version of the problem that assumes the forecasts are correct.
These more complicated prediction models have been shown to be necessary for these more expressive online problems, but they do not neatly fit into the aforementioned frameworks developed to use uncertainty-quantified predictions in simpler online problems.  In particular, the \OCU setting presents one key challenge: \textit{how should one relate uncertainty in the signal forecast with uncertainty in the optimal decisions}?

Our effort to address the above question will begin with inspiration from the literature on robust optimization~\cite{Gabrel:14:RO, Bertsimas:11:RO} to define a \textbf{decision uncertainty score}, which we will utilize in our eventual algorithm design (see \autoref{sec:alg}).
Robust optimization is a well-studied research topic that deals broadly with uncertain data in optimization problems~\cite{Gorissen:15:RO}.  In the language of \OCU, rather than searching for a single solution that works for known parameters (e.g., known future signal values), robust optimization techniques typically search for a solution that minimizes the worst-case cost across all scenarios in a given uncertainty set.  In \OCU, the decision-maker is given uncertainty-quantified forecasts that can be compactly represented by the following set:
\[
\mathcal{U} \coloneqq \{ \mathbf{z} \in \mathbb{R}^T: \ell_t \leq z_t \leq u_t \}.
\]
In words, $\mathcal{U}$ represents the set of all possible future signal values $\mathbf{z}$ that respect the uncertainty bounds $\{ \ell_t \}_{t \in [T]}$ and $\{ u_t \}_{t \in [T]}$.
Given the set $\mathcal{U}$, a robust optimization formulation of the \OCU problem would optimize the following:
\begin{align*}
    \min_{\mathbf{x} \coloneqq \{x_t\}_{t\in T}} \max_{\mathbf{z} \in \mathcal{U}} & \sum_{t=1}^{T} z_t x_t + \sum_{t=1}^{T+1} \beta \vert x_t - x_{t-1} \vert + \lambda \Vert \mathbf{x} \Vert_2^2, \\
    \text{s.t.} & \sum_{t=1}^{T} x_t = 1, \quad  x_t \in [0, d_t] \ \forall t \in [T].
\end{align*}
We note that the optimization is over both the decisions $\{x_t\}_{t\in[T]}$ and the ``worst-case sequence of prices'' $\mathbf{z}$.

While this ``classic'' robust optimization formulation of \OCU is instructive, the \textit{solutions} to this problem do not yet capture our desired connection between uncertainty in the signal and uncertainty in the decisions.  In fact, this robust optimization will reliably find that the ``worst-case sequence of prices'' is simply the case when $\{z_t = u_t \}_{t\in [T]}$ (as this sequence maximizes the overall cost), and the optimal decisions will reflect this -- from these outcomes, we cannot rigorously say anything about how much variability there is in the optimal \textit{decisions}.

To address this, we briefly review some notation and present our decision uncertainty score, which takes elements of the classic robust optimization model and reorients them to directly quantify the uncertainty in the optimal decisions due to uncertainty in the forecast.
Let $\hat{\mathbf{p}}$ denote the point forecasts provided in \sref{Def.}{dfn:uq-model}.  Slightly abusing notation, we let $\OPT(\mathbf{z}) \in [0,1]^T$ denote the optimal offline \OCU decisions given a sequence of prices $\mathbf{z}$.  
We are particularly interested in comparing the optimal decisions that assume the point forecasts are correct (i.e., $\OPT(\hat{\mathbf{p}})$ against the optimal decisions for any other sequence of prices within the uncertainty set.  To this end, we define the following \textbf{d}ecision \textbf{u}ncertainty \textbf{s}core (\DUS):
\begin{align}
    \DUS(\mathcal{U}, \hat{\mathbf{p}}) \coloneqq \max_{\mathbf{z} \in \mathcal{U}} \Vert \OPT(\hat{\mathbf{p}}) - \OPT(\mathbf{z})\Vert_1 \label{eq:dus}
\end{align}
Intuitively this quantity describes, in absolute terms, how much the optimal decision could change between the point forecasts ($\hat{\mathbf{p}}$) and \textit{any other} scenario that lies within the uncertainty set ($\mathbf{z} \in \mathcal{U}$).  By definition of the feasible decisions for the \OCU problem, the $\DUS(\mathcal{U}, \hat{\mathbf{p}})$ for any $\mathcal{U}$ and $\hat{\mathbf{p}}$ is strictly bounded between $0$ and $2$ -- this is the case because the $\ell_1$ norm of any solution to the \OCU problem is always $1$, and the difference between them is then at most $2$.

In our algorithm, we use the decision uncertainty score as a measure of \textit{trust} in the uncertainty-quantified forecasts.  Unfortunately, while the formulation of \DUS is clean, the optimization problem defined by \eqref{eq:dus} is non-convex --  in particular, while $\OPT(\mathbf{z})$ is the solution to a convex problem, the function $f(\mathbf{z}) = \Vert \OPT(\hat{\mathbf{p}}) - \OPT(\mathbf{z})\Vert_1$ that we are maximizing is generally non-concave, making the overall problem non-convex.

Despite the optimization problem's non-convexity (i.e., difficulty), the structure of the \OCU problem and the objective in \eqref{eq:dus} does allow us to prove something about the global optimal solution.  In particular, we have the following:

\begin{theorem} \label{thm:global-optimum-dus}
Let $y^\star$ denote the true global optimal value of $\DUS(\mathcal{U}, \hat{\mathbf{p}})$ (in \eqref{eq:dus}).
A global non-convex optimizer can find a solution $\hat{y}$ that satisfies the following inequality:
\[
y^\star \leq \hat{y} + \varepsilon,
\]
for any fixed $\varepsilon > 0$ in finite time, using $O\left( \left( \frac{\sqrt{T} \cdot \text{diam}(\mathcal{U})}{2 \lambda \cdot \varepsilon} \right)^T \right)$ iterations~\cite{Malherbe:17,  Bachoe:21}.
\end{theorem}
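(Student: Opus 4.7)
The plan is to reduce \autoref{thm:global-optimum-dus} to a standard Lipschitz global-optimization guarantee, leveraging the fact that the regularizer $\lambda \Vert \mathbf{x} \Vert_2^2$ in the \SASP objective makes the optimal decision map stable under perturbations of the price vector. Concretely, I would (i) prove that $\mathbf{z} \mapsto \OPT(\mathbf{z})$ is Lipschitz, (ii) lift this to a Lipschitz bound on the \DUS objective $f(\mathbf{z}) \coloneqq \Vert \OPT(\hat{\mathbf{p}}) - \OPT(\mathbf{z}) \Vert_1$, and then (iii) invoke an off-the-shelf global Lipschitz optimizer on the compact box $\mathcal{U}$.

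For step (i), I would write the offline objective as $F(\mathbf{x};\mathbf{z}) = \mathbf{z}^\top \mathbf{x} + g(\mathbf{x}) + \lambda \Vert \mathbf{x} \Vert_2^2$, where $g(\mathbf{x}) = \beta \sum_t \vert x_t - x_{t-1} \vert$ is convex but non-smooth. Since the Hessian of $\lambda \Vert \mathbf{x} \Vert_2^2$ equals $2\lambda I$, $F(\cdot;\mathbf{z})$ is $2\lambda$-strongly convex over the convex feasible set $\mathcal{X} = \{\mathbf{x} : \sum_t x_t = 1,\ x_t \in [0,d_t]\}$. Writing the first-order optimality conditions $0 \in \mathbf{z} + \partial g(\OPT(\mathbf{z})) + 2\lambda\, \OPT(\mathbf{z}) + N_{\mathcal{X}}(\OPT(\mathbf{z}))$ for both $\mathbf{z}$ and $\mathbf{z}'$, subtracting, pairing with $\OPT(\mathbf{z}) - \OPT(\mathbf{z}')$, and using monotonicity of the subdifferential $\partial g$ and of the normal cone $N_{\mathcal{X}}$ then yields
\[
\Vert \OPT(\mathbf{z}) - \OPT(\mathbf{z}') \Vert_2 \;\leq\; \tfrac{1}{2\lambda} \Vert \mathbf{z} - \mathbf{z}' \Vert_2.
\]

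For step (ii), the reverse triangle inequality combined with $\Vert v \Vert_1 \leq \sqrt{T}\, \Vert v \Vert_2$ for $v \in \mathbb{R}^T$ gives $\vert f(\mathbf{z}) - f(\mathbf{z}') \vert \leq \tfrac{\sqrt{T}}{2\lambda}\, \Vert \mathbf{z} - \mathbf{z}' \Vert_2$, so $f$ is $L$-Lipschitz on $\mathcal{U}$ with $L = \sqrt{T}/(2\lambda)$. For step (iii), $\mathcal{U}$ is a compact box with Euclidean diameter $\text{diam}(\mathcal{U})$, so the LIPO-type guarantees of \cite{Malherbe:17, Bachoe:21} apply directly: any $L$-Lipschitz function on a compact domain of diameter $D$ can be globally optimized to additive accuracy $\varepsilon$ using $O\bigl((LD/\varepsilon)^T\bigr)$ evaluations, and substituting our $L$ and $D$ reproduces the stated iteration count. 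The main obstacle is step (i): the non-smooth switching term $g$ together with the box/deadline constraints prevents a direct implicit-function-theorem argument, so one must invoke convex-analytic tools (monotonicity of subdifferentials, normal-cone calculus) rather than a simple Lipschitz-of-argmin bound for smooth unconstrained problems. Once this perturbation inequality is in place, the remaining steps are essentially mechanical.
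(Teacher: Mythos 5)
Your proposal is correct and follows essentially the same route as the paper's proof: establish that $\mathbf{z} \mapsto \OPT(\mathbf{z})$ is $\frac{1}{2\lambda}$-Lipschitz via the first-order optimality (variational inequality) conditions and strong convexity contributed by the regularizer, lift this to a $\frac{\sqrt{T}}{2\lambda}$-Lipschitz bound on $f(\mathbf{z})$ using the reverse triangle inequality and $\Vert \cdot \Vert_1 \leq \sqrt{T}\Vert \cdot \Vert_2$, and then invoke the standard Lipschitz global-optimization guarantee of \cite{Malherbe:17, Bachoe:21}. The only cosmetic difference is that you phrase the perturbation argument via normal-cone monotonicity while the paper writes the equivalent variational inequalities directly and groups the switching cost and regularizer into a single strongly convex term.
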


We defer the proof of \autoref{thm:global-optimum-dus} to \sref{Appendix}{apx:proof-leverage}.  The key idea in the proof is to show that the function $f(\mathbf{z}) = \Vert \OPT(\hat{\mathbf{p}}) - \OPT(\mathbf{z})\Vert_1$ is Lipschitz continuous in $\mathbf{z}$, which allows us to apply standard results from optimization literature~\cite{Malherbe:17,  Bachoe:21}.

Beyond this theoretical guarantee, we have found that it is practical to solve this problem in practice using solvers such as Gurobi or NCVX~\cite{gurobi, liang2022ncvx, curtis2017bfgssqp}.
To motivate this and demonstrate the intuitive meaning of the decision uncertainty score, \autoref{fig:dus_opt} shows an example of solving \eqref{eq:dus} for two instances of the \ROCU problem with $T=10$ time steps. 

In these two instances, the point predictions are identical.  However, the uncertainty sets differ -- in the left instance, the uncertainty set is small (i.e., the forecasts are very certain), while in the right instance, the uncertainty set is large (i.e., the forecasts are very uncertain).  In both cases, we solve for the decision uncertainty score and plot two decision sequences: $\OPT(\hat{\mathbf{p}})$ (the optimal decisions that assume the point forecasts are correct), and $\OPT(\mathbf{z})$ (the optimal decisions for the worst-case $\mathbf{z} \in \mathcal{U}$ defined by \eqref{eq:dus}) In the left instance, the decision uncertainty score is small ($\DUS(\mathcal{U}_1, \hat{\mathbf{p}}) = 1.36e^{-11}$), and the optimal decisions for the worst-case $\mathbf{z}$ are very similar to those for the point forecasts.  In contrast, in the right instance, the decision uncertainty score is large ($\DUS(\mathcal{U}_2, \hat{\mathbf{p}}) = 2$), and the optimal decisions for the worst-case $\mathbf{z}$ differ significantly from those for the point forecasts.  This example illustrates how the decision uncertainty score translates uncertainty in the price signal forecasts into a measure of the amount of variability in the possible optimal decisions.

\begin{figure}[h]
    \centering
    \includegraphics[width=0.95\linewidth]{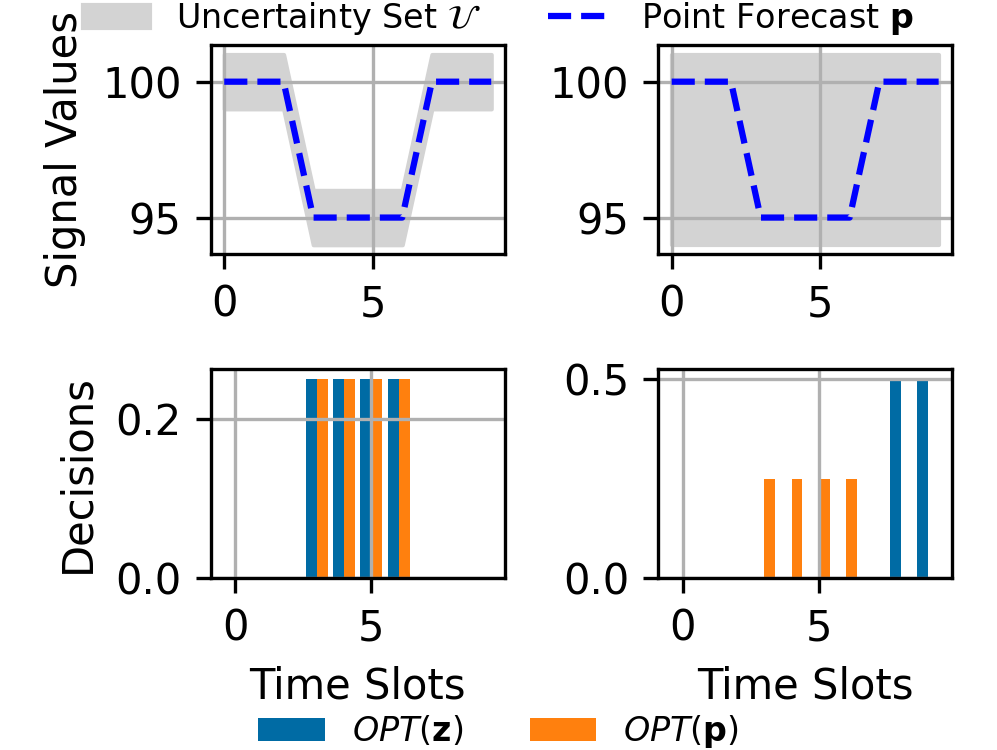}
    \caption{Two instances of \ROCU with identical point forecasts but different uncertainty sets to illustrate our \textit{decision uncertainty score} (\DUS).  The left instance has a small uncertainty set and a small \DUS, while the right instance has a large uncertainty set and a large \DUS.  In both instances, we plot the optimal decisions for the point forecasts (orange) and the optimal decisions for the worst-case scenario in the uncertainty set (blue).  The \DUS score captures how much the optimal decisions can vary due to uncertainty in the forecasts.}
    \label{fig:dus_opt}
\end{figure}

For the instances illustrated in \autoref{fig:dus_opt}, solving for the decision uncertainty score takes an average of 0.07964 seconds on a MacBook Pro M2 with 16 GB RAM using NCVX.  This suggests that it is practical to compute the decision uncertainty score in real time for most applications of interest.

\section{Algorithm Design}
\label{sec:alg}

In this section, we give a brief background on existing online approaches to solve the workload shifting problem before presenting our main algorithm, \UQAdv, which leverages the decision uncertainty score (\DUS) defined in the previous section to inform its decision-making process.

\subsection{Background}

In the classic competitive setting (i.e., without forecasts), the (not regularized) \OCS problem is solved optimally (i.e., it achieves the best competitive ratio possible) by an online algorithm framework called ``ramp-on, ramp-off'' (\RORO), shown by \cite{Lechowicz:24}.  In what follows, we present this algorithm and prove its competitive ratio when applied to the \ROCU problem.

In \RORO~\cite{Lechowicz:24}, the online decision at each time step is made by solving a \emph{pseudo-cost minimization problem} to determine the amount of workload to complete at the current time step (i.e., $x_t \in [0, \min(1-w_{t-1}, d_t)]$).  This minimization balances between the extreme options of completing ``too much'' of the workload early (thus incurring suboptimal costs if prices later drop) and waiting too long to complete the workload (thus risking being forced to complete a large portion of the workload at once, potentially at high cost).

Whenever the price signal is ``sufficiently attractive'', the pseudo-cost minimization finds the best decision that completes just enough of the workload to maintain a certain competitive ratio.  To rigourously define this trade-off, the authors in \cite{Lechowicz:24} introduce a \emph{dynamic threshold function} $\phi(w) : [0,1] \rightarrow [p_{\min},p_{\max}]$, which is a function that maps the current utilization $w$ to a price value.  It is defined as:
\begin{equation}
    \phi(w) = p_{\max} - \beta + \left( \frac{p_{\max}}{\alpha_{\RORO}} - p_{\max} + 2 \beta \right) \exp \left( \frac{w}{\alpha_{\RORO}} \right), \label{eq:phi}
\end{equation}
where $\alpha_{\RORO}$ is the optimal competitive ratio for \OCS, defined as~\cite{Lechowicz:24}:
{\small
\begin{equation}
    \alpha_{\RORO} \coloneqq \left[ W \left( \left( \frac{2\beta + p_{\min}}{p_{\max}} - 1\right) \exp \left( \frac{2\beta}{p_{\max}} - 1 \right) \right) - \frac{2\beta}{p_{\max}} + 1 \right]^{-1}. \label{eq:alpha_roro}
\end{equation}
}

\noindent In the above, $W(\cdot)$ is the Lambert $W$ function~\cite{Corless:96LambertW}.  Given this definition of $\alpha_{\RORO}$, note that $\phi(\cdot)$ is monotonically decreasing on the interval $w \in [0,1]$.  We summarize the \RORO algorithm in \sref{Algorithm}{alg:roro}.
\begin{algorithm}[t]
\caption{Online Ramp-On, Ramp-Off Algorithm (\RORO)~\cite{Lechowicz:24}}\label{alg:roro}
{\small
    \begin{algorithmic}[1]
        \item \textbf{input: } threshold function $\phi(w) : [0,1] \rightarrow [p_{\min},p_{\max}]$
        \item \textbf{initialize: } initial decision $x_0 = 0$, current utilization $w_0 = 0$
        \While {price $p_t$ is revealed and $w_{t-1} < 1$}:
            \State{solve \textbf{pseudo-cost minimization problem} to obtain decision:}
            \begin{align*}
                x_t = \kern-2em \argmin_{x \in [0,\min( 1-w_{t-1}, d_t )]} \kern-1em & p_t x_t + \beta \vert x - x_{t-1} \vert - \int_{w_{t-1}}^{w_{t-1}+x} \phi(u) du
            \end{align*}
            \State{update utilization as $w^{(t)} = w_{t-1} + x_t$}
        \EndWhile
    \end{algorithmic}
}
\end{algorithm}
For the \ROCU problem, we can prove that \RORO achieves almost the same competitive ratio as in \OCS, with an additive term that depends on the regularization.  Specifically, we have the following theorem:
\begin{theorem}
\label{thm:roro-comp-rocu}
    The \RORO algorithm given by \cite{Lechowicz:24} and summarized in \sref{Algorithm}{alg:roro} is $\alpha$-competitive for \ROCU, where $\alpha$ is defined as:
    \begin{equation}
        \alpha \coloneqq \frac{T(\alpha_{\RORO} \cdot p_{\min} + \lambda)}{T p_{\min} + \lambda},
    \end{equation}
    where $\alpha_{\RORO}$ is defined in \eqref{eq:alpha_roro}, $T$ is the time horizon, $p_{\min}$ is the minimum price, and $\lambda$ is the regularization parameter.  Note that as $\lambda \rightarrow 0$, we have $\alpha \rightarrow \alpha_{\RORO}$.
\end{theorem}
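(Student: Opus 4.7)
The plan is to leverage the known $\alpha_{\RORO}$-competitive guarantee of \RORO on the un-regularized \OCS problem from \cite{Lechowicz:24} and absorb the extra regularization term through elementary norm inequalities. Decomposing the \ROCU objective as $C_{\text{OCS}}(\mathbf{x}) + C_{\text{reg}}(\mathbf{x})$ with $C_{\text{reg}}(\mathbf{x}) := \lambda \Vert \mathbf{x} \Vert_2^2$, I first note that the pseudo-cost minimization in line 4 of \sref{Algorithm}{alg:roro} never references $\lambda$, so \RORO produces the same trajectory on \OCS as on \ROCU. Combining the prior \OCS guarantee with the trivial inequality $C_{\text{OCS}}(\text{OPT}_{\text{OCS}}) \leq C_{\text{OCS}}(\text{OPT}_{\text{ROCU}})$ (since $\text{OPT}_{\text{OCS}}$ minimizes $C_{\text{OCS}}$ over the same feasible set) immediately yields
\[
C_{\text{OCS}}(\RORO) \;\leq\; \alpha_{\RORO} \cdot C_{\text{OCS}}(\text{OPT}_{\text{ROCU}}).
\]

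Next I would establish elementary bounds on the regularization pieces. For any feasible $\mathbf{x}$ with $\sum_t x_t = 1$ and $x_t \in [0, d_t] \subseteq [0,1]$, $\Vert \mathbf{x} \Vert_2^2 \leq \Vert \mathbf{x} \Vert_1 = 1$ (because $x_t^2 \leq x_t$ on $[0,1]$), so $C_{\text{reg}}(\RORO) \leq \lambda$. Cauchy--Schwarz gives $\Vert \mathbf{x} \Vert_2^2 \geq (\sum_t x_t)^2/T = 1/T$, so $C_{\text{reg}}(\text{OPT}_{\text{ROCU}}) \geq \lambda/T$. Finally, $C_{\text{OCS}}(\text{OPT}_{\text{ROCU}}) \geq p_{\min}$ because the signal term alone contributes at least $p_{\min} \sum_t x_t^\star = p_{\min}$ and the switching term is nonnegative.

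Putting the pieces together and writing $\text{OPT} := \text{OPT}_{\text{ROCU}}$, the competitive ratio of \RORO on \ROCU is bounded by
\[
\frac{C_{\text{OCS}}(\RORO) + C_{\text{reg}}(\RORO)}{C_{\text{OCS}}(\text{OPT}) + C_{\text{reg}}(\text{OPT})} \;\leq\; \frac{\alpha_{\RORO}\, C_{\text{OCS}}(\text{OPT}) + \lambda}{C_{\text{OCS}}(\text{OPT}) + C_{\text{reg}}(\text{OPT})}.
\]
A brief monotonicity check on $f(b,r) = (\alpha_{\RORO} b + \lambda)/(b+r)$ shows the right-hand side is maximized at the simultaneous lower bounds $(b, r) = (p_{\min}, \lambda/T)$, giving
\[
\frac{\alpha_{\RORO} p_{\min} + \lambda}{p_{\min} + \lambda/T} \;=\; \frac{T(\alpha_{\RORO} p_{\min} + \lambda)}{T p_{\min} + \lambda} \;=\; \alpha,
\]
and the $\lambda \to 0$ limit collapses back to $\alpha_{\RORO}$ as advertised.

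The main obstacle is the monotonicity step just invoked. One readily computes $\partial f/\partial r = -(\alpha_{\RORO} b + \lambda)/(b+r)^2 < 0$ unconditionally, but $\partial f/\partial b = (\alpha_{\RORO} r - \lambda)/(b+r)^2$ is $\leq 0$ at $r = \lambda/T$ only when $T \geq \alpha_{\RORO}$. Under the paper's assumptions (bounded $\beta$ and moderate price ratios keeping $\alpha_{\RORO}$ a modest constant, alongside realistic horizons $T$) this inequality is comfortably satisfied in every workload-shifting regime of interest, so either it is invoked implicitly or stated as an explicit side condition; the short-horizon corner case $T < \alpha_{\RORO}$ would require a separate argument falling back on the raw \OCS bound.
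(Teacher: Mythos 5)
Your proof is correct and shares the same skeleton as the paper's: split the objective into the un-regularized part plus the regularizer, bound the regularizer above by $\lambda$ for the algorithm and below by $\lambda/T$ for \OPT, lower-bound \OPT's base cost by $p_{\min}$, and finish with a monotonicity argument on the resulting ratio. The one genuine difference is how the base comparison is obtained. You invoke the $\alpha_{\RORO}$-competitiveness of \RORO on \OCS as a black box, justified by the (correct) observations that the pseudo-cost step never references $\lambda$ so the trajectory is unchanged, and that the \OCS-optimal cost lower-bounds the \OCS-cost of the \ROCU optimum. The paper instead re-opens the machinery of \cite{Lechowicz:24}: it lower-bounds \OPT's un-regularized cost by $\phi(w_j)-\beta$ (where $w_j$ is \RORO's final utilization), upper-bounds \RORO's un-regularized cost by $\int_0^{w_j}\phi(u)\,du+\beta+(1-w_j)p_{\max}$, and uses the identity equating the latter to $\alpha_{\RORO}(\phi(w_j)-\beta)$, arriving at the ratio $\bigl(\alpha_{\RORO}(\phi(w_j)-\beta)+\lambda\bigr)/\bigl(\phi(w_j)-\beta+\lambda/T\bigr)$ before substituting $\phi(w_j)-\beta\ge p_{\min}$. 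Your route is more modular and shorter; the paper's keeps the instance-dependent quantity $\phi(w_j)-\beta$ visible, which is a (slightly) tighter intermediate bound but is discarded in the final step anyway. Your flagged caveat is also well taken: the final substitution of the lower bound $p_{\min}$ increases the ratio only when $T\ge\alpha_{\RORO}$ (for $\lambda>0$), and the paper's own chain of inequalities needs exactly the same condition at the step where $\phi(w_j)-\beta$ is replaced by $p_{\min}$, so this is an implicit assumption of the published proof rather than a defect unique to yours.
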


\noindent We defer the proof of \autoref{thm:roro-comp-rocu} to \sref{Appendix}{apx:proof-roro}.  Building on top of the \RORO algorithm, the authors in \cite{Lechowicz:24} also propose a learning-augmented algorithm called \ROAdv, which leverages \textit{black-box advice} (i.e., direct predictions of the optimal decisions) to improve average-case performance. 
The key idea in \ROAdv is to make decisions that are a convex combination of the online decisions given by \RORO and the advice at each time step -- this is a relatively standard technique in the literature on learning-augmented algorithms (see, e.g., \cite{Antoniadis:20MTS}).  
However, an important detail of \ROAdv is that it requires a \emph{trust parameter} $\lambda \in [0,1]$ to be set by the user in advance -- this parameterizes how much \ROAdv ``trusts'' the advice versus the robust online decisions.  In practice, this parameter must be fine-tuned to achieve good performance, which can be challenging if the quality of the advice is unknown or varies over time.

In our approach (described below), we attempt to leverage the extra information provided by uncertainty-quantified forecasts to automatically and adaptively determine how close decisions should be to the advice versus the robust online decisions, without requiring a trust parameter to be set by the user in advance.

\subsection{The \UQAdv algorithm}

In this section, we present our main algorithm, \UQAdv, which takes uncertainty-quantified forecasts (as defined in \sref{Def.}{dfn:uq-model}) of the relevant signal as input.

The key idea in \UQAdv is to precompute several relevant quantities based on the uncertainty-quantified forecasts, and then use these to determine how much to trust the forecasts.  Specifically, \UQAdv first computes $\{\hat{x}_t\}_{t=1}^T \coloneqq \OPT(\hat{\mathbf{p}})$, the optimal decisions that assume the point predictions are correct $\{\hat{p}_t\}_{t=1}^T$ (i.e., ignoring uncertainty).  It also computes the decision uncertainty score $\DUS(\mathcal{U}, \hat{\mathbf{p}})$ as defined in \eqref{eq:dus}, which quantifies the variability in the (true) optimal decisions based on the point predictions.  

Based on the value of $\DUS(\mathcal{U}, \hat{\mathbf{p}})$, \UQAdv then computes a mixing parameter $\gamma \in [0,1]$ as:
\begin{equation}
    \gamma \gets 1 - \frac{\DUS(\mathcal{U}, \hat{\mathbf{p}})}{2},
\end{equation}
where note that since the decision uncertainty score is bounded in the interval $[0,2]$, we have $\gamma \in [0,1]$.
Intuitively, $\gamma$ quantifies how much the optimal decisions vary based on the uncertainty-quantified forecasts -- if $\gamma$ is large (i.e., close to 1), then the optimal decisions are relatively insensitive to the uncertainty, and thus \UQAdv can ``trust'' the point predictions more.  Conversely, if $\gamma$ is small (i.e., close to 0), then the optimal decisions vary significantly based on the uncertainty, and thus \UQAdv should ``trust'' the robust online decisions more.  We formally describe the \UQAdv algorithm in \sref{Algorithm}{alg:uq}.

\begin{algorithm}[t]
\caption{\UQAdv Algorithm}\label{alg:uq}
{\small
    \begin{algorithmic}[1]
        \item \textbf{input:} uncertainty-quantified predictions $\hat{\mathbf{p}} = \{\hat{p}_t\}_{t=1}^T$ with uncertainty set $\mathcal{U} = \{\ell_t, u_t\}_{t=1}^T$, threshold function $\phi(w) : [0,1] \rightarrow [p_{\min},p_{\max}]$. \vspace{0.1cm}
        \item \textbf{initialize:} initial decision $x_0 = 0$, current utilization $w_0 = 0$ \vspace{0.1cm}
        \State{compute advice $\hat{\mathbf{x}} \gets \OPT(\hat{\mathbf{p}})$}
        \State{solve for decision uncertainty score $\DUS(\mathcal{U}, \hat{\mathbf{p}})$ using \eqref{eq:dus}}
        \State{set mixing parameter $\gamma \gets 1 - \nicefrac{\DUS(\mathcal{U}, \hat{\mathbf{p}})}{2}$} \vspace{0.1cm}
        
        \While {price $p_t$ is revealed and $w_{t-1} < 1$}:
        \State{solve \textbf{pseudo-cost minimization} to obtain robust decision:}
            \begin{align*}
                \tilde{x}_t = \kern-2em \argmin_{x \in [0,\min( 1-w_{t-1}, d_t )]} \kern-1em & p_t x_t + \beta \vert x - x_{t-1} \vert - \int_{w_{t-1}}^{w_{t-1}+x} \phi(u) du
            \end{align*}
        \State{set online decision $x_t = \gamma \cdot \hat{x}_t + (1-\gamma) \cdot \tilde{x}_t$}
        \State{update utilization as $w^{(t)} = w_{t-1} + x_t$}
        
        \EndWhile
    \end{algorithmic}
}
\end{algorithm}

In what follows, we analyze the theoretical performance of \UQAdv in terms of three metrics: (i) \emph{consistency}, which quantifies performance when the point predictions are perfect; (ii) \emph{robustness}, which quantifies performance when the point predictions are adversarially bad; and (iii) a new metric we call \emph{uncertainty-quantified robustness}, which quantifies performance when the point predictions are imperfect but the uncertainty quantification is accurate.  We summarize our main theoretical results below, and defer the proofs to \autoref{apx:proof-alg}.

\begin{theorem}[Consistency of \UQAdv] \label{thm:uq-consistency}
    Under the assumption that the point predictions are perfect (i.e., $\hat{\mathbf{p}} = \mathbf{p}$), \UQAdv is $\eta$-consistent, where:
    \begin{equation*}
        \eta \coloneqq 1 + (\alpha - 1) \frac{\DUS(\mathcal{U}, \mathbf{\hat{p}})}{2} ,
    \end{equation*}
    where $\alpha$ is the competitive ratio of \RORO for \ROCU given in \autoref{thm:roro-comp-rocu}.  Note that as $\DUS(\mathcal{U}, \mathbf{\hat{p}}) \rightarrow 0$, we have $\eta \rightarrow 1$.
\end{theorem}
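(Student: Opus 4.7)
The plan is to exploit the convexity of the \SASP objective together with the fact that UQ-Adv's final decisions are, by construction, a pointwise convex combination. Writing $\hat{\mathbf{x}} = \OPT(\hat{\mathbf{p}})$ for the advice sequence and $\tilde{\mathbf{x}} = \{\tilde{x}_t\}_{t \in [T]}$ for the pseudo-cost-minimization decisions produced in the inner loop, the algorithm outputs $\mathbf{x}^{\text{alg}} = \gamma \hat{\mathbf{x}} + (1-\gamma)\tilde{\mathbf{x}}$. Under the theorem's assumption $\hat{\mathbf{p}} = \mathbf{p}$, so $\hat{\mathbf{x}}$ is exactly the true offline optimum $\OPT$.

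First, I would verify that $\Cost(\cdot)$ in \eqref{eq:obj} is convex as a function of the decision vector: the signal term is linear in $\mathbf{x}$, the switching penalty $\beta \sum_t \lvert x_t - x_{t-1}\rvert$ is a sum of convex absolute-value terms, and the regularizer $\lambda \lVert \mathbf{x}\rVert_2^2$ is convex. Since the feasible set (box constraints together with $\sum_t x_t = 1$) is also convex, $\mathbf{x}^{\text{alg}}$ is itself feasible, and Jensen's inequality yields
\[
\Cost(\mathbf{x}^{\text{alg}}) \;\leq\; \gamma\,\Cost(\hat{\mathbf{x}}) \;+\; (1-\gamma)\,\Cost(\tilde{\mathbf{x}}).
\]
Under perfect predictions the first term equals $\gamma\,\Cost(\OPT)$, and I would invoke \autoref{thm:roro-comp-rocu} to bound $\Cost(\tilde{\mathbf{x}}) \leq \alpha\,\Cost(\OPT)$. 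Substituting and using $1-\gamma = \DUS(\mathcal{U},\hat{\mathbf{p}})/2$ gives
\[
\Cost(\mathbf{x}^{\text{alg}}) \;\leq\; \bigl[\gamma + (1-\gamma)\alpha\bigr]\Cost(\OPT) \;=\; \Bigl[1 + (\alpha-1)\tfrac{\DUS(\mathcal{U},\hat{\mathbf{p}})}{2}\Bigr]\Cost(\OPT) \;=\; \eta\,\Cost(\OPT),
\]
which is the claimed bound.

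The main obstacle is justifying $\Cost(\tilde{\mathbf{x}}) \leq \alpha\,\Cost(\OPT)$ rigorously, since each $\tilde{x}_t$ is produced by the pseudo-cost subroutine using UQ-Adv's blended utilization $w_{t-1}$ rather than the utilization a standalone RORO execution would have accumulated. The cleanest route is to observe that the threshold-function argument behind \autoref{thm:roro-comp-rocu} is essentially \emph{local}: the per-step pseudo-cost inequality, integrated over any monotone utilization path that ultimately reaches $1$, already delivers the competitive bound. A short auxiliary lemma---that the step-wise pseudo-cost bound accumulates correctly for any feasible trajectory driven by the same threshold $\phi(\cdot)$, irrespective of the particular $\{w_t\}$ path---should close this gap. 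Some care is needed at the compulsory-trade tail near the deadline, which must be shown neither to violate feasibility of the convex combination nor to inflate $\Cost(\tilde{\mathbf{x}})$ beyond the $\alpha$ factor.
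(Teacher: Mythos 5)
Your proposal is correct and follows essentially the same route as the paper's proof: the paper also bounds $\Cost(\UQAdv)$ by $\gamma\,\Cost(\OPT) + (1-\gamma)\,\Cost(\RORO)$ (doing the convexity argument term by term via the triangle inequality on the switching cost and Cauchy--Schwarz plus Jensen on the regularizer, where you invoke Jensen on the whole objective at once) and then applies \autoref{thm:roro-comp-rocu}. The subtlety you flag about $\tilde{\mathbf{x}}$ being generated from the \emph{blended} utilization trajectory rather than a standalone \RORO run is real, and the paper's proof also passes over it by simply identifying the $\tilde{\mathbf{x}}$ cost terms with the worst-case \RORO costs.
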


We defer the proof of this theorem to \sref{Appendix}{apx:proof-consistency}. Intuitively, the proof of \autoref{thm:uq-consistency} follows from the fact that when the point predictions are perfect, the advice $\hat{\mathbf{x}} = \OPT(\mathbf{\hat{p}})$ is optimal, and thus the performance of \UQAdv is a convex combination of the optimal performance and the performance of \RORO, weighted by $\gamma$ and $1-\gamma$, respectively.  

We note one feature of this result that arises due to UQ considerations -- if the point predictions are exactly correct but the decision uncertainty score is high (i.e., $\DUS(\mathcal{U}, \mathbf{\hat{p}}) \to 2$), $\eta$ approaches $\alpha$.  This mirrors how \UQAdv essentially ``hedges'' in a high-uncertainty scenario, choosing to more closely follow the robust \RORO algorithm to hedge against the possibility that the forecasts are not correct.

Replacing $\gamma \coloneqq \nicefrac{\DUS(\mathcal{U}, \mathbf{\hat{p}})}{2}$ with a manually specified trust parameter, this result exactly matches the consistency result of \ROAdv for the \OCS problem~\cite{Lechowicz:24}.

\noindent Next, we consider the robustness of \UQAdv:

\begin{theorem}[Robustness of \UQAdv] \label{thm:uq-robustness}
    Under the assumption that \textit{both} the point predictions and the uncertainty quantification are completely (i.e., adversarially) incorrect, \UQAdv is $\zeta$-robust, where:
    \begin{equation*}
        \zeta \coloneqq \left( 1 - \frac{\DUS(\mathcal{U}, \hat{\mathbf{p}})}{2} \right) \cdot \frac{p_{\max} + 2\beta + \lambda}{p_{\min} + \frac{\lambda}{T}} + \left( \frac{\DUS(\mathcal{U}, \hat{\mathbf{p}})}{2} \right) \cdot \alpha,
    \end{equation*}
    where $\alpha$ is the competitive ratio of \RORO for \ROCU given in \autoref{thm:roro-comp-rocu}.  Note that as $\DUS(\mathcal{U}, \mathbf{\hat{p}}) \rightarrow 2$, we have $\zeta \rightarrow \alpha$.
\end{theorem}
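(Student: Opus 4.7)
The plan is to exploit the convex-combination structure of \SASP decisions in \UQAdv. Since $x_t = \gamma \hat{x}_t + (1-\gamma)\tilde{x}_t$ with $\gamma = 1 - \DUS(\mathcal{U},\hat{\mathbf{p}})/2$, and since each term of the objective in \eqref{eq:obj} is convex in $\mathbf{x}$ (the signal term is linear, the regularizer $\lambda \Vert \mathbf{x} \Vert_2^2$ is quadratic, and the switching term is convex via the triangle inequality), Jensen's inequality yields
\[
\Cost(\mathbf{x}) \leq \gamma \cdot \Cost(\hat{\mathbf{x}}) + (1-\gamma) \cdot \Cost(\tilde{\mathbf{x}}),
\]
where both costs on the right are evaluated at the true prices $\mathbf{p}$. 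It then suffices to bound each summand against $\Cost(\OPT)$ separately, and substitute $\gamma = 1 - \DUS(\mathcal{U},\hat{\mathbf{p}})/2$ at the end to recover $\zeta$.

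For the \RORO component, \autoref{thm:roro-comp-rocu} immediately gives $\Cost(\tilde{\mathbf{x}}) \leq \alpha \cdot \Cost(\OPT)$ independently of forecast quality. For the advice component, the crucial observation is that even when the point forecasts are adversarial, $\hat{\mathbf{x}} = \OPT(\hat{\mathbf{p}})$ remains a \emph{feasible} solution of \SASP: it is non-negative, respects $\hat{x}_t \leq d_t$, and sums to one. Using feasibility alone, I can worst-case each cost component at the true prices: the signal cost is at most $p_{\max} \sum_t \hat{x}_t = p_{\max}$; the switching cost is at most $2\beta$ since $\sum_{t=1}^{T+1}|\hat{x}_t - \hat{x}_{t-1}| \leq \sum_{t=1}^{T+1}(\hat{x}_t + \hat{x}_{t-1}) = 2$ using non-negativity and $\hat{x}_0 = \hat{x}_{T+1} = 0$; and the regularizer is at most $\lambda$ since $\sum_t \hat{x}_t^2 \leq (\max_t \hat{x}_t)(\sum_t \hat{x}_t) \leq 1$. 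Combining, $\Cost(\hat{\mathbf{x}}) \leq p_{\max} + 2\beta + \lambda$.

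To complete the ratio, I need a matching lower bound on $\Cost(\OPT)$. The signal cost is at least $p_{\min} \sum_t x^\star_t = p_{\min}$, the switching cost is nonnegative, and Cauchy--Schwarz gives $\sum_t (x^\star_t)^2 \geq (\sum_t x^\star_t)^2 / T = 1/T$, so the regularizer contributes at least $\lambda/T$. Thus $\Cost(\OPT) \geq p_{\min} + \lambda/T$. Chaining the two bounds together yields
\[
\Cost(\mathbf{x}) \leq \left[\gamma \cdot \frac{p_{\max} + 2\beta + \lambda}{p_{\min} + \lambda/T} + (1-\gamma)\cdot \alpha\right]\Cost(\OPT),
\]
which is exactly $\zeta \cdot \Cost(\OPT)$ after the definition of $\gamma$ is substituted; the limit $\DUS \to 2$ sends $\gamma \to 0$, recovering $\zeta \to \alpha$.

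The main obstacle I anticipate is justifying the Jensen step cleanly in the presence of the rate constraints and the compulsory-trade mechanism. Specifically, I need to confirm that \UQAdv's online iterates truly equal the convex combination $\gamma \hat{x}_t + (1-\gamma)\tilde{x}_t$ at every step (including inside the compulsory trade), and that this combination remains feasible: the box constraint $x_t \leq d_t$ is preserved coordinatewise by convex mixing, but the deadline constraint requires that both $\hat{\mathbf{x}}$ and $\tilde{\mathbf{x}}$ sum to one, which for $\tilde{\mathbf{x}}$ relies on the compulsory-trade fallback built into \RORO. Once feasibility and the convex decomposition are pinned down, the remaining arithmetic is routine.
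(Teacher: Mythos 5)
Your proposal is correct and follows essentially the same route as the paper's proof: both decompose $\Cost(\UQAdv)$ into the convex combination $\gamma\,\Cost(\ADV) + (1-\gamma)\,\Cost(\RORO)$ (you via Jensen on the convex objective, the paper via a term-by-term triangle-inequality and Cauchy--Schwarz argument, which amounts to the same thing), then bound $\Cost(\ADV) \leq p_{\max} + 2\beta + \lambda$ by feasibility, invoke \autoref{thm:roro-comp-rocu} for the \RORO term, and divide by $\Cost(\OPT) \geq p_{\min} + \nicefrac{\lambda}{T}$. The feasibility and coupling subtleties you flag at the end (the combined iterate's box constraint, the compulsory trade, and whether the \RORO component of the mixed trajectory inherits the standalone guarantee) are likewise passed over in the paper's own argument, so your treatment is on par with it.
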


We defer the proof of this theorem to \sref{Appendix}{apx:proof-robustness}.  The proof follows by considering a worst-case scenario where each element of the uncertainty-quantified forecasts is arbitrarily bad, and showing that in this case, \UQAdv is upper bounded by a combination of the performance of \RORO (the $\alpha$ term) and the performance of the advice (which can be arbitrarily bad).  Replacing $\gamma \coloneqq \nicefrac{\DUS(\mathcal{U}, \mathbf{\hat{p}})}{2}$ with a manually specified trust parameter, this result approximately matches the robustness result of \ROAdv for the \OCS problem~\cite{Lechowicz:24}, with an additional additive term due to the regularization term not present in the \OCS problem.

We note one feature of the robustness case that arises from the decision uncertainty score -- if the point predictions are incorrect \textit{and} the decision uncertainty score is high (i.e., $\DUS(\mathcal{U}, \mathbf{\hat{p}}) \to 2$), $\zeta$ approaches $\alpha$, which is the existing competitive ratio without forecasts (the best we can hope to do if forecasts are not useful).  
This mirrors how \UQAdv essentially ``hedges'' in a high-uncertainty scenario, choosing to more closely follow the robust \RORO algorithm to hedge against the possibility that the forecasts are not correct.

On the other hand, if point predictions are incorrect but the decision uncertainty score is still low (i.e., $\DUS(\mathcal{U}, \mathbf{\hat{p}}) \to 0$), the UQ forecasts are incorrect as a whole and $\zeta$ approaches $\frac{p_{\max} + 2\beta + \lambda}{p_{\min} + \frac{\lambda}{T}}$.  Assuming the UQ forecasts are sufficiently well-calibrated, this should happen very rarely (i.e., with less than 5\% probability) -- below, we consider UQ-robustness, which captures the (more probable) case where point predictions are imperfect but the uncertainty quantification is accurate.  

\begin{theorem}[UQ-Robustness of \UQAdv] \label{thm:uq-uqrobustness}
    Under the assumption that the uncertainty quantification is accurate (i.e., the true signals lie within the uncertainty intervals), but the point predictions may be imperfect, \UQAdv is $\theta$-UQ-robust, where:

    {\small
    \begin{equation*}
        \theta \coloneqq 1 + \left( \frac{\DUS(\mathcal{U}, \hat{\mathbf{p}})}{2} \right) \left( \alpha - 1 + \left( 1 - \frac{\DUS(\mathcal{U}, \hat{\mathbf{p}})}{2}\right) \frac{p_{\max} - p_{\min} + 4\beta + 2\lambda}{p_{\min} + \frac{\lambda}{T}} \right),
    \end{equation*}
    }

    \noindent where $\alpha$ is the competitive ratio of \RORO for \ROCU given in \autoref{thm:roro-comp-rocu}.  
\end{theorem}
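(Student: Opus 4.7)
Write $\mu \coloneqq \nicefrac{\DUS(\mathcal{U}, \hat{\mathbf{p}})}{2}$, so the mixing weight used by \UQAdv is $\gamma = 1-\mu$ and its trajectory satisfies $\mathbf{x} = (1-\mu)\hat{\mathbf{x}} + \mu\tilde{\mathbf{x}}$, where $\hat{\mathbf{x}} = \OPT(\hat{\mathbf{p}})$ is the advice and $\tilde{\mathbf{x}}$ is the pseudo-cost trajectory \RORO plays on the true prices. The objective in \eqref{eq:obj} is a sum of a linear, a $\beta$-weighted $\ell_1$-difference, and a $\lambda$-weighted $\ell_2^2$ term -- each convex in the decision vector -- so Jensen's inequality gives
\[
\Cost(\ALG) \;\leq\; (1-\mu)\,\Cost(\hat{\mathbf{x}}) \;+\; \mu\,\Cost(\tilde{\mathbf{x}}).
\]
The \RORO term is handled directly by \autoref{thm:roro-comp-rocu}, namely $\Cost(\tilde{\mathbf{x}}) \leq \alpha\,\Cost(\OPT)$. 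The substance of the proof is showing that on the true prices the advice incurs at most an additive slack of $\mu(p_{\max} - p_{\min} + 4\beta + 2\lambda)$ above $\Cost(\OPT)$.

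For the slack bound, I would invoke the UQ-robustness hypothesis that $\mathbf{p} \in \mathcal{U}$, which by definition of \DUS in \eqref{eq:dus} gives $\Vert \hat{\mathbf{x}} - \mathbf{x}^\star \Vert_1 \leq 2\mu$ with $\mathbf{x}^\star \coloneqq \OPT(\mathbf{p})$. Let $\Delta_t \coloneqq \hat{x}_t - x_t^\star$; since both $\hat{\mathbf{x}}$ and $\mathbf{x}^\star$ feasibly complete the unit workload, $\sum_t \Delta_t = 0$, and I exploit this structure in each of the three terms. Splitting $\Delta_t$ into positive and negative parts yields $\sum \Delta_t^+ = \sum \Delta_t^- \leq \mu$, hence the signal-cost difference $\sum_t p_t \Delta_t \leq (p_{\max}-p_{\min})\mu$. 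For the switching term, the reverse triangle inequality plus telescoping (using $\hat{x}_0 = x_0^\star = \hat{x}_{T+1} = x_{T+1}^\star = 0$) bounds the difference by $2\beta \Vert \Delta \Vert_1 \leq 4\beta\mu$. For the regularizer, the identity $\hat{x}_t^2 - (x_t^\star)^2 = \Delta_t(\hat{x}_t + x_t^\star)$ combined with $\sum_t \Delta_t = 0$ allows a recentering trick: for any constant $c$, $\sum_t \Delta_t(\hat{x}_t + x_t^\star) = \sum_t \Delta_t(\hat{x}_t + x_t^\star - c)$; choosing $c = 1$ makes $|\hat{x}_t + x_t^\star - 1| \leq 1$ because $\hat{x}_t, x_t^\star \in [0,1]$, giving a regularizer slack of at most $2\lambda\mu$. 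Summing the three pieces yields the claimed additive bound.

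To convert the additive slack into the multiplicative constant $\theta$, I would lower-bound $\Cost(\OPT)$: the signal cost is at least $p_{\min}\sum_t x_t^\star = p_{\min}$, and Cauchy--Schwarz with $\sum_t x_t^\star = 1$ gives $\sum_t (x_t^\star)^2 \geq \nicefrac{1}{T}$, so $\Cost(\OPT) \geq p_{\min} + \nicefrac{\lambda}{T}$. Assembling,
\[
\Cost(\ALG) \;\leq\; \bigl[1 + \mu(\alpha-1)\bigr]\Cost(\OPT) + \mu(1-\mu)(p_{\max} - p_{\min} + 4\beta + 2\lambda),
\]
and dividing through by $p_{\min} + \nicefrac{\lambda}{T}$ reproduces $\theta$ exactly.

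The only genuinely non-routine step is the regularizer bound: the naive estimate $\hat{x}_t + x_t^\star \leq 2$ would produce $4\lambda\mu$ and break the stated constant. Recentering around $c=1$, which is only legal because $\sum \Delta_t = 0$, is what saves the factor of two, and I expect it to be the main obstacle to get right. The remaining ingredients -- the convex decomposition of the cost, the invocation of \RORO's competitive ratio, the $\ell_1$ bound from \DUS, and the Cauchy--Schwarz lower bound on \Cost(\OPT) -- are all standard once assembled in this order.
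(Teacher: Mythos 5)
Your proposal is correct and follows essentially the same route as the paper's proof: decompose $\Cost(\UQAdv)$ as a convex combination of the advice cost and the \RORO cost, invoke \autoref{thm:roro-comp-rocu} for the latter, prove that the advice incurs at most an additive slack of $\frac{\DUS(\mathcal{U},\hat{\mathbf{p}})}{2}(p_{\max}-p_{\min}+4\beta+2\lambda)$ over \OPT (this is exactly the paper's \sref{Lemma}{lem:adv-opt-gap}, with identical treatments of the signal and switching terms), and convert the slack to a multiplicative factor via $\Cost(\OPT)\geq p_{\min}+\nicefrac{\lambda}{T}$. The one place you genuinely diverge is the regularizer slack: the paper pushes $\hat{\mathbf{x}}$ to a vertex of the simplex and $\mathbf{x}^\star$ toward its center and evaluates the extremal gap, whereas you use the recentering identity $\sum_t \Delta_t(\hat{x}_t+x^\star_t)=\sum_t \Delta_t(\hat{x}_t+x^\star_t-1)$ enabled by $\sum_t\Delta_t=0$; both yield the same $2\lambda\cdot\frac{\DUS(\mathcal{U},\hat{\mathbf{p}})}{2}$ term, and your version is arguably the cleaner of the two since it bounds the gap for arbitrary feasible pairs rather than arguing that the extremal configuration is worst-case.
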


We defer the proof of this theorem to \sref{Appendix}{apx:proof-uqrobustness}.  The proof follows by combining the arguments used in the proofs of \autoref{thm:uq-consistency} and \autoref{thm:uq-robustness}, and leveraging the fact that when the uncertainty quantification is accurate, we can bound the performance of the advice $\ADV = \hat{\mathbf{x}}$ in terms of the performance of $\OPT(\mathcal{I})$ and the decision uncertainty score $\DUS(\mathcal{U}, \hat{\mathbf{p}})$.

This result effectively ``bridges'' the traditional learning-augmented notions of consistency and robustness by leveraging information provided by the decision uncertainty score $\DUS(\mathcal{U}, \hat{\mathbf{p}})$.  Considering the extreme cases, if forecasts are very good and $\DUS(\mathcal{U}, \mathbf{\hat{p}}) \rightarrow 0$, we have $\theta \rightarrow 1$ at the same time as the consistency result $\eta$ also approaches $1$.  On the other hand, if forecasts are highly uncertain and $\DUS(\mathcal{U}, \mathbf{\hat{p}}) \rightarrow 2$, we have $\theta \rightarrow \alpha$ at the same time as the robustness result $\zeta$ also approaches $\alpha$.  In this sense, the additional expressiveness of UQ forecasts allows us to obtain the ``best-of-both-worlds'' trade-off, achieving performance equal to $\OPT$ when forecasts are good while retaining performance comparable to the optimal algorithm without predictions (\RORO) when forecasts are uncertain.

\section{Evaluation}
\label{sec:exp}

In this section, we experimentally evaluate the performance of \UQAdv using synthetic instances of \ROCU based on carbon intensity and electricity price traces.

\subsection{Experimental Setup} \label{sec:exp-setup}
We first detail our setup for synthetic experiments.  We consider unit-size workloads with no rate constraints (i.e., $d_t = 1 \ \forall t \in [T]$).  Note that the unit-size assumption is without loss of generality since we consider performance relative to baselines that complete the same size workload.

\smallskip\noindent\textbf{Carbon intensity traces and forecasts. \ } We use historical grid carbon intensity data (reported in grams of CO$_2$ per kilowatt-hour) from Electricity Maps~\cite{electricity-map} for three grid regions in the U.S., namely \texttt{CAISO} (California ISO), \texttt{ERCOT} (ERCOT, Texas), and \texttt{ISONE} (ISO New England).  See \sref{Table}{tab:characteristics} for a summary of the data.  For these three grid regions, we obtain real uncertainty-quantified forecasts from~\citet{Li:24}.  
In our experiments that use these traces, the objective is to minimize the total carbon footprint, subject to a constraint that the total workload should be completed before the deadline $T$.  For instance, using a toy-example with $T = 2$, if $p_1 = 800$ and $p_2 = 200$, a schedule with $\mathbf{x} = [1, 0]$ incurs a carbon footprint of $800$, while a schedule with $\mathbf{x} = [0.5, 0.5]$ incurs a carbon footprint of $500$.

\smallskip\noindent\textbf{Energy price traces and forecasts. \ } We use historical zonal price data for New York City, U.S. during late 2022 through 2023.  Prices are reported in USD per megawatt-hour (MWh) -- see \sref{Table}{tab:characteristics} for a summary of the data.  For this price data, we obtain conformal uncertainty quantified forecasts from~\citet{Alghumayjan:24}.  
In our experiments that use these traces, the objective is to minimize the total procurement cost of energy, subject to a constraint that the total amount of energy should be purchased before the deadline $T$.  For instance, using a toy-example with $T = 2$, if $p_1 = \$25$/MWh and $p_2 = \$125$/MWh, a schedule with $\mathbf{x} = [1, 0]$ incurs a cost of $\$25$, while a schedule with $\mathbf{x} = [0.5, 0.5]$ incurs a cost of $\$75$. 

\begin{table}[h]
\centering %
\caption{Statistics for signal traces, including the duration, granularity, minimum value, and maximum value~\cite{Li:24, Alghumayjan:24}.} \vspace{-1em}
\label{tab:characteristics}
{\small
\begin{tabular}{|l|l|l|l|l|}
\hline
Trace  & Signal Type                       & Duration                                                                            & Min. & Max. \\ \hline
\texttt{CAISO}  & \multirow{3}{*}{Carbon Intensity} & \multirow{3}{*}{\begin{tabular}[c]{@{}l@{}} 7/2/22-12/27/22 \\ Hourly\end{tabular}} & 59.33     & 338.63    \\ \cline{1-1} \cline{4-5} 
\texttt{ERCOT}  &                                   &                                                                                     & 131.55    & 429.92    \\ \cline{1-1} \cline{4-5} 
\texttt{ISO-NE} &                                   &                                                                                     & 137.68    & 308.39    \\ \hline
\texttt{NY-ISO} & Electricity Price                 & \begin{tabular}[c]{@{}l@{}}12/15/22-12/28/23\\ Hourly\end{tabular}                & -8.10   & 3121.07    \\ \hline
\end{tabular}
}
\vspace{-1em}
\end{table}

\smallskip\noindent\textbf{Parameter settings. \ }
In the main experiment setting, we evaluate algorithms on 1,000 random instances of \ROCU. 
For experiments where we vary specific parameters (see \autoref{fig:varied}), we evaluate on 200 random instances per value of variation that we test.
For the time horizon $T$, we consider values in the range of 2 to 24 -- we typically set $T=8$, unless otherwise specified.
For the switching cost coefficient $\beta$, we typically fix $\beta = 20$, but also consider a range of $\beta \in [0, (U-L)/2]$ in a specific experiment.  We set the regularization parameter $\lambda = 0$.

In one experiment, we directly manipulate the \textit{quality} of the uncertainty sets provided to \UQAdv to understand the impact of varying uncertainty in both the UQ intervals and the point forecasts. 
We define and specify a parameter $\xi \in [0,1]$ -- based on $\xi$, we set a \textit{width} for the uncertainty sets as $\texttt{width} = \xi \cdot \frac{(p_{\max}-p_{\min})}{2}$.  Using this width parameter, we generate synthetic uncertainty forecasts as follows: given a single time-step's true signal value $p$, its uncertainty set is defined by a lower bound $\ell = p - \text{Unif}(0,1)\cdot \texttt{width}$, and an upper bound $u = \ell + \texttt{width}$, where $\text{Unif}(0,1)$ is a uniform random number between $0$ and $1$.   If $\ell$ falls below $p_{\min}$, it is first truncated to $p_{\min}$ before computing $u$.

Given these synthetic uncertainty-quantified forecasts, we use the solution of the decision uncertainty score (see \autoref{sec:leverage}) to find the signal sequence $\mathbf{z}$ that causes $\OPT(\mathbf{z})$ to deviate maximally from the true optimal decisions, setting the point forecasts $\hat{\mathbf{p}} = \mathbf{z}$.  This construction simulates a worst-case scenario that simultaneously increases uncertainty and generates less reliable predictions.
Note that when $\xi \to 0$, the point forecasts converge to the ground truth sequence, and when $\xi \to 1$, the forecasts significantly degrade.

\smallskip\noindent\textbf{Baselines. \ }
To evaluate \UQAdv, we compare it against several \textit{baseline} techniques.  First, we use CVXPY~\cite{CVXPY} to solve for the optimal offline solution $\OPT$ (i.e., given perfect knowledge of the true future signal values).
Representing the optimal baseline without predictions, we implement the \RORO algorithm discussed in \autoref{sec:alg}.  We also implement a simple baseline called \texttt{Threshold} that has been considered by prior work~\cite{Wiesner:21, Lechowicz:24} -- this technique simply chooses to allocate resources to the workload whenever the signal value is below $\sqrt{p_{\min} p_{\max}}$.

Finally, we compare \UQAdv against two baseline techniques that \textit{do} use predictions, namely point forecasts of the future signal values.  We implement the \ROAdv algorithm from prior work~\cite{Lechowicz:24} that combines a robust algorithm (i.e., \RORO) with black-box advice.  We initialize \ROAdv with two values for the trust parameter $\lambda$ (also sometimes called a ``combination factor'').  We consider $\lambda = 0.5$ as a simple baseline. On the traces with \textit{real uncertainty sets}, we find the optimal $\lambda$ that yields the best performance on the tested signal traces, calling this $\lambda^\star$.

In all experiments, we primarily report the empirical competitive ratio of different algorithms as the performance metric -- it is defined as $\nicefrac{\Cost(\ALG(\mathcal{I}))}{\Cost(\OPT(\mathcal{I}))} \geq 1$ for each instance $\mathcal{I}$.  Note that a smaller empirical competitive ratio is better.

\begin{table}[t]
\caption{Overall performance of \UQAdv compared to baseline techniques, aggregated over all experiments using real signal traces and forecasts.  Note that $\ROAdv[\lambda = \lambda^\star]$ indicates the baseline with the best ``trust parameter'' \textit{in hindsight} (i.e., if all instances are known in advance, see \autoref{sec:exp-setup} for details)
} \vspace{-1em}
\label{tab:results}
\begin{tabular}{|l|ll|}
\hline
\multirow{2}{*}{Algorithm} & \multicolumn{2}{l|}{Empirical Competitive Ratio} \\ \cline{2-3} 
 & \multicolumn{1}{l|}{average}  & 95th percentile  \\ \hline
\textbf{\UQAdv} & \multicolumn{1}{l|}{ \textbf{1.073882} }   & \textbf{1.311226}                \\ \hline
$\ROAdv[\lambda=\lambda^\star]$ & \multicolumn{1}{l|}{1.024320}        & 1.190832             \\ \hline
$\ROAdv[\lambda = 0.5]$    & \multicolumn{1}{l|}{ 1.189204 }        &  1.704182 \\ \hline
\RORO & \multicolumn{1}{l|}{ 1.363991 }        &  2.440307              \\ \hline
Simple Threshold & \multicolumn{1}{l|}{ 1.428846 }        & 2.407804                \\ \hline
\end{tabular}
\end{table}

\begin{figure}[t]
\centering
\begin{subfigure}{\columnwidth}
        \centering
        \includegraphics[width=0.9\columnwidth]{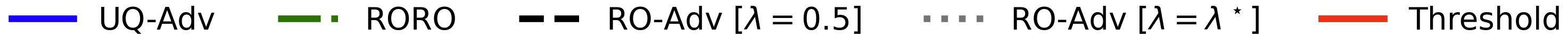}
        \label{fig:legend}
    \end{subfigure}
    \vspace{0.0cm}
    \begin{subfigure}{.48\columnwidth}
        \centering
        \includegraphics[width=\columnwidth]{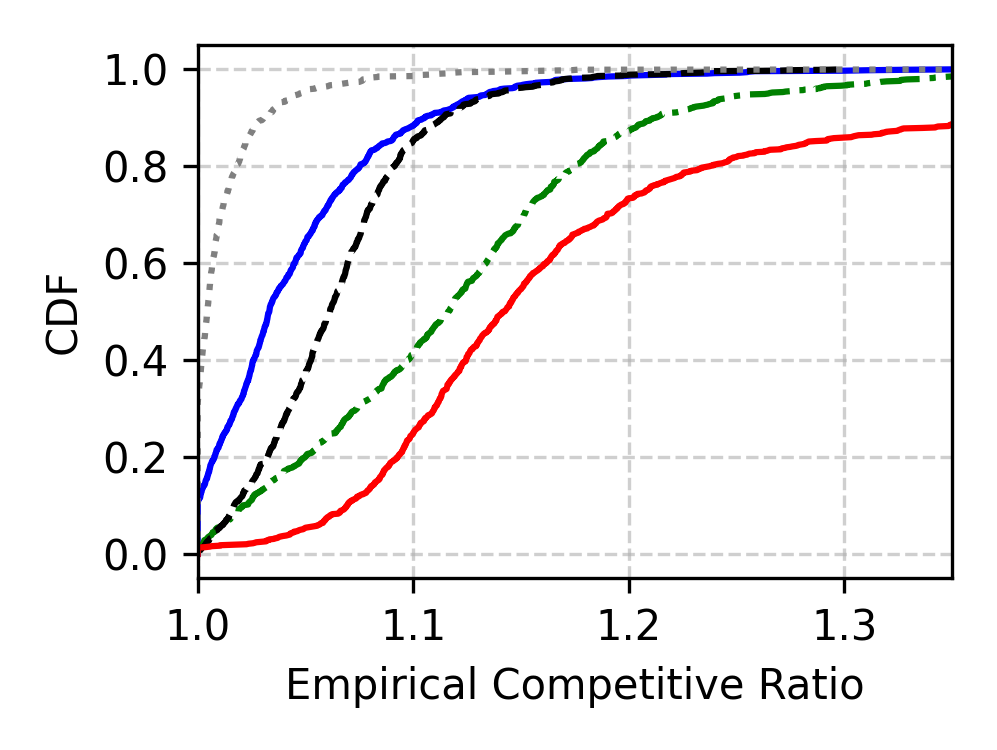}
        \caption{Carbon footprint}
        \label{fig:carbon_cdf}
    \end{subfigure}
    \hfill
    \begin{subfigure}{.48\columnwidth}
        \centering
        \includegraphics[width=\columnwidth]{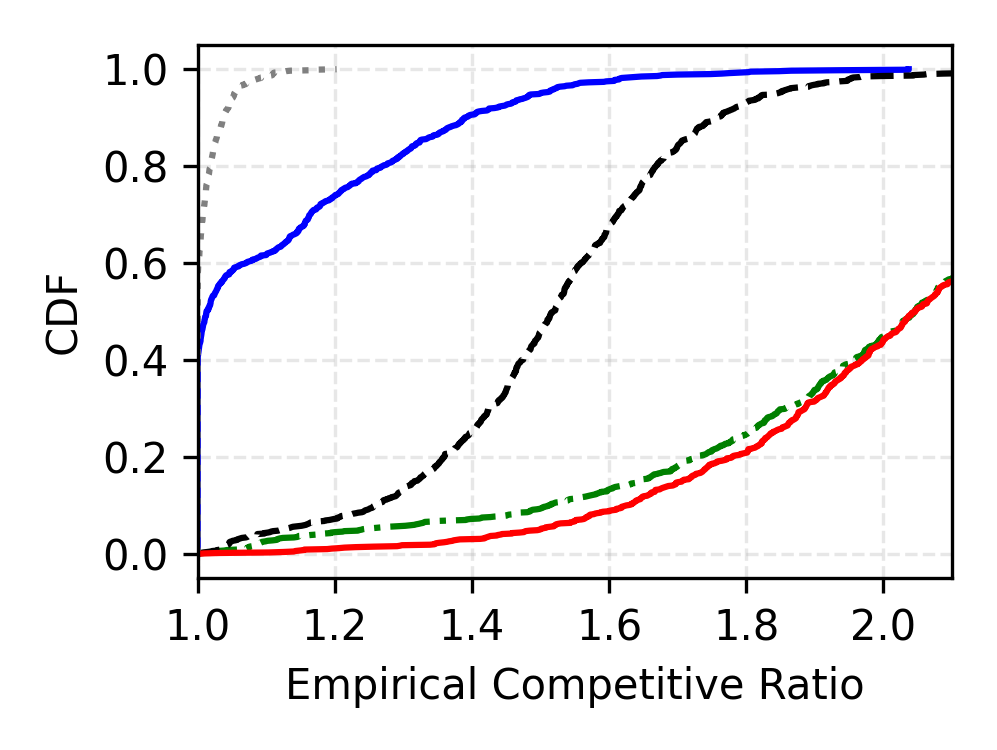}
        \caption{Electricity cost}
        \label{fig:elec_cdf}
    \end{subfigure}
    \caption{Cumulative distribution functions (CDFs) of the empirical competitive ratio for each tested algorithm in two experimental settings: (a) details those experiments where the objective is to minimize carbon footprint (i.e., carbon intensity signal traces), while (b) details experiments where the objective is to minimize the cost of electricity. }
    \label{fig:cdfs}
\end{figure}

\section{Experimental Results}

\begin{figure*}[t]
    \centering
    
    \begin{subfigure}{\textwidth}
        \centering
        \includegraphics[width=0.6\textwidth]{figures/legend.png}
        \label{fig:top}
    \end{subfigure}
        
    \begin{subfigure}{0.24\textwidth}
        \centering
        \includegraphics[width=\textwidth]{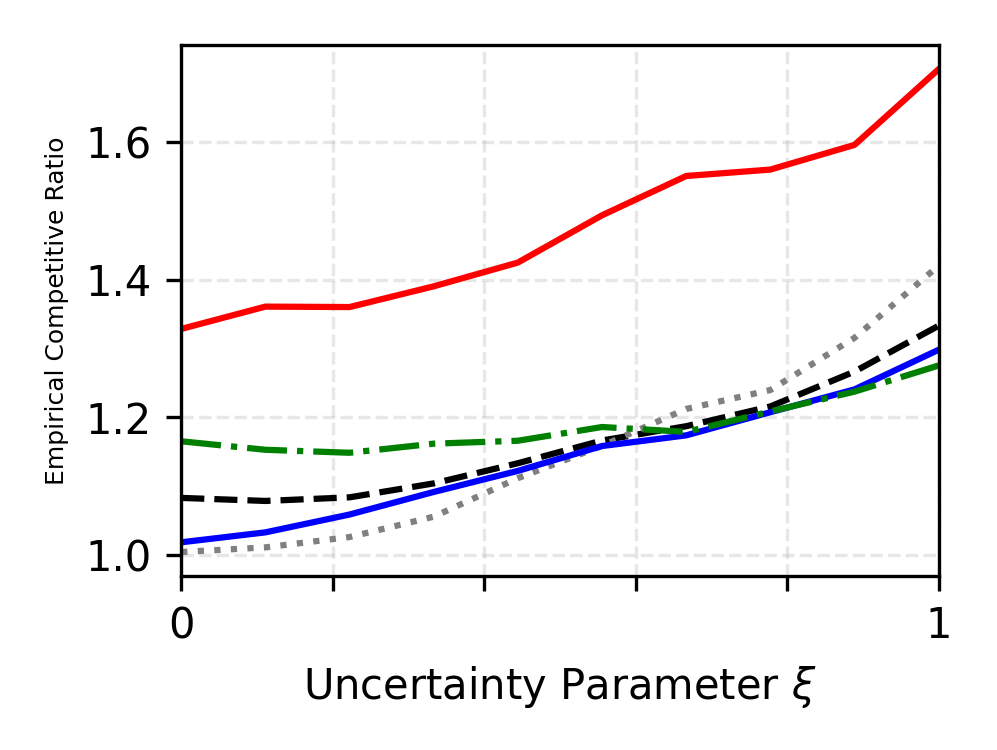}
        \caption{Changing uncertainty}
        \label{fig:uncertainty}
    \end{subfigure}
    \hfill
    \begin{subfigure}{0.24\textwidth}
        \centering
        \includegraphics[width=\textwidth]{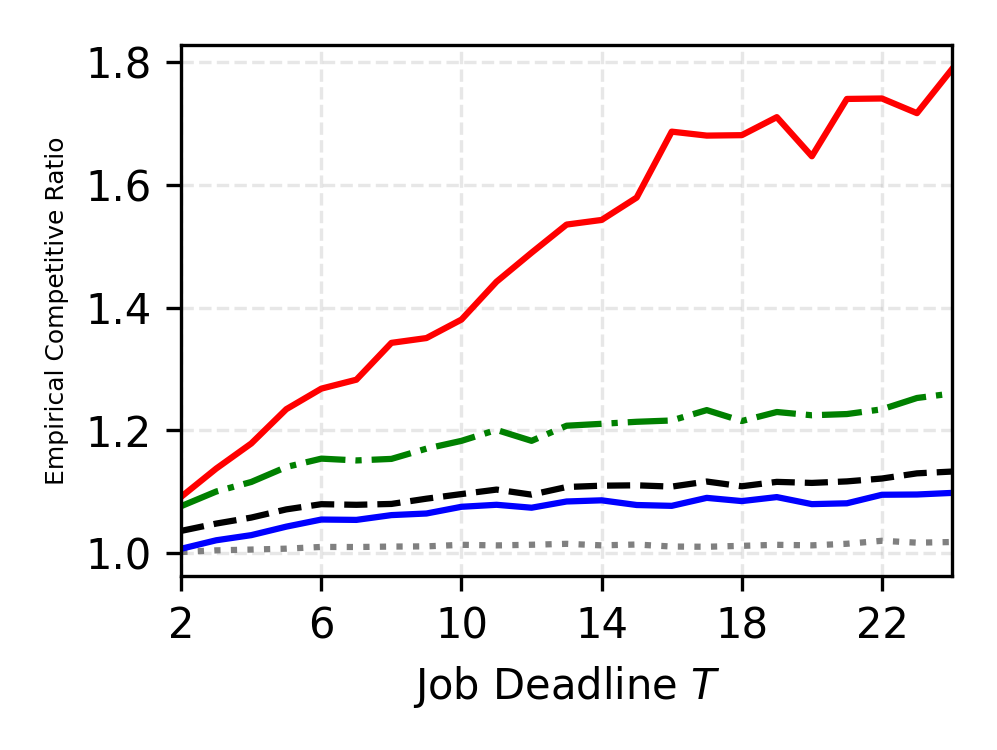}
        \caption{Changing deadline $T$}
        \label{fig:deadline}
    \end{subfigure}
    \hfill
    \begin{subfigure}{0.24\textwidth}
        \centering
        \includegraphics[width=\textwidth]{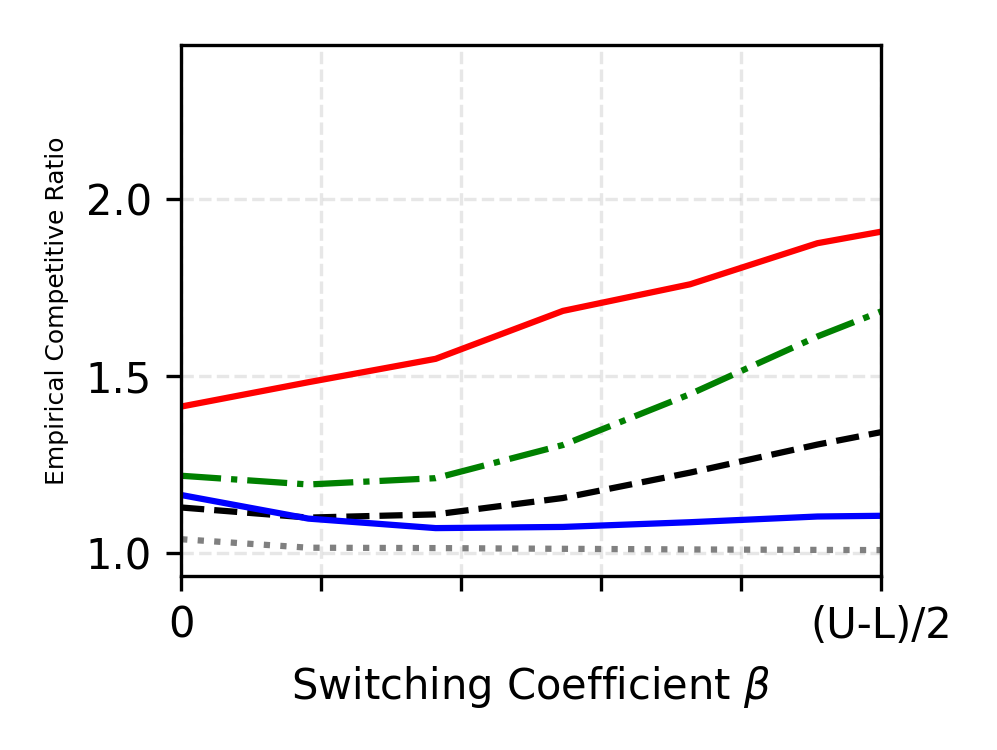}
        \caption{Changing $\beta$}
        \label{fig:beta}
    \end{subfigure}
    \begin{subfigure}{.24\textwidth}
        \centering
        \includegraphics[width=\textwidth]{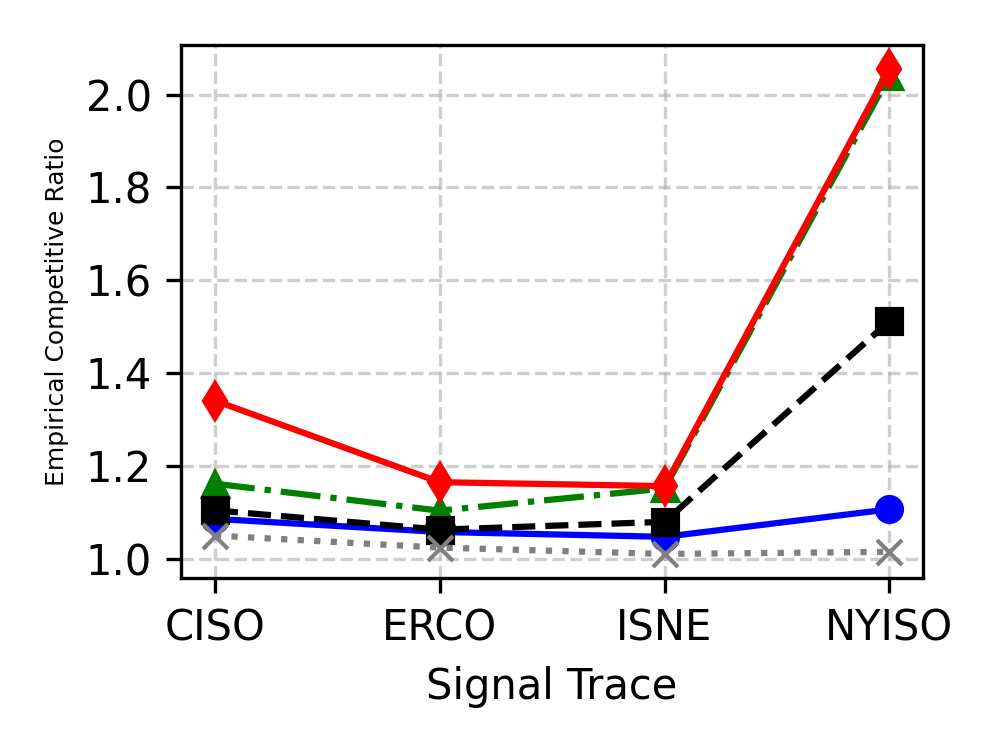}
        \caption{Changing signal trace}
        \label{fig:locations}
    \end{subfigure}
    
    \caption{Experiments with varying parameters.  With the exception of (d), all experiments average over all four signal traces.\\
    (a): Changing the ``width'' of UQ sets and resulting synthetic forecast quality using $\xi$ parameter introduced in \autoref{sec:exp-setup}.\\
    (b): Changing the deadline (i.e., time horizon) $T$ -- as $T$ grows, all algorithms have more flexibility to shift the workload in time.\\ 
    (c): Changing the switching cost parameter $\beta$ (see \autoref{eq:obj}) -- as $\beta$ grows, changes in online decisions are penalized more.\\
    (d): Evaluating the effects of different signal traces (i.e., data sets).  Note that the \texttt{NYISO} trace is used for \textit{electricity cost minimization} experiments, while all other traces are used for \textit{carbon footprint minimization} experiments (see \autoref{sec:exp-setup}).}
    \label{fig:varied}
\end{figure*}

We introduce our experimental results in this section. We find that our \texttt{UQ-Advice} algorithm outperforms baselines by 12.6\% on average and 26.15\% in the 95th percentile when using empirical competitive ratio as a metric of comparison -- see \sref{Table}{tab:results} for a detail of these top-level results across all signal traces.

In \autoref{fig:cdfs}, we present the results of two objective settings we considered: carbon cost in 3 different energy grids as in \autoref{fig:carbon_cdf} and energy price in \autoref{fig:elec_cdf}. We observe that our \texttt{UQ-Advice} algorithm is able to largely outperform the uncertainty-agnostic \texttt{RO-Advice} with $\lambda=0.5$ in both cases and attains performance closer to that of the hindsight-optimized \texttt{RO-Advice} with $\lambda = \lambda^\star$. Note that the hindsight-optimized baseline is a strong one -- namely, the instances considered in \autoref{fig:cdfs} are the same instances on which the value of $\lambda$ has been optimized.
In the carbon footprint minimization case, we note improvements of 1.74\% over the naive \ROAdv with $\lambda = 0.5$, 6.33\% over \RORO, and 10.96\% over \texttt{Threshold} while remaining within 3.42\% of \ROAdv with $\lambda = \lambda^\star$. In the electricity price case study, we note even larger improvements of 24.82\% over the naive \ROAdv with $\lambda = 0.5$, 42.66\% over \RORO, and 44.26\% over \texttt{Threshold} while remaining within 11.14\% of \ROAdv with $\lambda = \lambda^\star$ -- notably, in roughly half of all instances, \UQAdv exactly matches the performance of the hindsight-optimal \ROAdv.

\smallskip
\noindent \textbf{Effect of Parameters. \ }
In \autoref{fig:varied}, we show how the tested algorithms in our case study respond to variations of key parameters of the \ROCU problem -- see \autoref{sec:exp-setup} for specifics of how we manipulate these parameters.

$\triangleright$ \textit{Effect of Uncertainty. \ }
In \autoref{fig:uncertainty}, we vary $\xi$ as described in \autoref{sec:exp-setup} and observe that in high uncertainty environments (i.e., as $\xi$ approaches $1$), our \texttt{UQ-Advice} outperforms both the naive and hindsight-optimized versions of \texttt{RO-Advice} and is on average 5.44\% better than our baselines. 
$T$ and $\beta$ are kept constant as they were in the main experiment setting, and results are averaged over all tested traces.
For each individual baseline algorithm, we note a 2.14\% increase over \texttt{RO-Advice} with $\lambda = 0.5$, a 1.32\% increase over \texttt{RO-Advice} with $\lambda = \lambda^\star$, a 4.03\% increase over \texttt{RORO}, and a 22.82\% increase over \texttt{Threshold}. By improving over each baseline, \texttt{UQ-Advice} demonstrates the advantages of integrating information about uncertainty into the decision-making pipeline. 
Recall that the hindsight-optimized $\ROAdv[\lambda = \lambda^\star]$ is using the best combination factor $\lambda$ as found based on the data sets with real uncertainty forecasts -- this effectively simulates a distribution shift scenario where a previously fine-tuned instance of \ROAdv is deployed in a new, higher uncertainty scenario. 

$\triangleright$ \textit{Effect of Deadline. \ }
In \autoref{fig:deadline}, we observe that \texttt{UQ-Advice} continues to outperform the naïve \texttt{RO-Advice} with $\lambda = 0.5$ and the performance ordering of all algorithms is relatively stable over a tested deadline range of $T \in [2,24]$.  Other parameters are kept constant, and results are averaged over all tested traces.  Note that as $T$ grows, all algorithms but particularly the optimal solution have more flexibility (read: more opportunities) to shift the workload in time.  For this reason, the most drastic performance degradations are for the \texttt{Threshold} and \RORO baselines, which are both conservative and complete the workload at suboptimal signal values, indicated by the widening gap between them and \OPT.  Compared to all baselines, \UQAdv improves by an average of 7.77\%; which breaks down to a 2.48\% improvement over \texttt{RO-Advice} with $\lambda = 0.5$, a 10.06\% improvement over \texttt{RORO}, and a 28.19\% improvement over \texttt{Threshold}. Additionally, \UQAdv lies  within 5.75\% of the hindsight optimized \ROAdv with $\lambda = \lambda ^\star$. 

$\triangleright$ \textit{Effect of Switching Cost. \ }
In \autoref{fig:beta}, we vary the switching cost parameter $\beta$ to observe how this impacts the performance of each algorithm, keeping other parameters constant and averaging over all tested traces.
As $\beta$ grows, the penalty of changing decisions across adjacent time steps becomes more severe.
Interestingly, we find that \UQAdv's performance with respect to baselines improves noticeably as $\beta$ increases towards the (trace-dependent) upper bound of $\nicefrac{U-L}{2}$.  In our default experiment, we set $\beta = 20$, which is towards the lower end of the tested range -- as $\beta$ grows, the gap between \UQAdv and \ROAdv grows noticeably, indicating that information about the uncertainty in forecasts is particularly useful when e.g., it is undesirable for workload decisions to fluctuate significantly.
On average, \UQAdv improves on all baselines by an average of 20.52\%. Individually, \UQAdv improves on \ROAdv with $\lambda = 0.5$ by 17.39\%, \texttt{RORO} by 34.12\%, and \texttt{Threshold} by 41.43\%. Additionally, \UQAdv remained within 9.94\% of the hindsight optimized \ROAdv with $\lambda=\lambda^\star$ across all settings of $\beta$.

$\triangleright$ \textit{Effect of Signal Traces. \ }
Finally, we consider the effect of different signal characteristics by disaggregating performance on each tested data set.  In these experiments, all other parameters are set to their default values (see \autoref{sec:exp-setup}).
Recall that the objective on each trace is not necessarily the same -- the \texttt{NYISO} trace contains electricity price data, so the objective is to minimize the cost of electricity, while the other traces (\texttt{CAISO}, \texttt{ERCOT}, \texttt{ISONE}) all contain carbon intensity data, so the objective is to minimize carbon footprint.

In \autoref{fig:locations}, perhaps the most notable difference is the relative performance of different algorithms on the \texttt{NYISO} trace compared to \textit{all other traces} -- due to the difference in objectives, this corresponds to an observation that many of the baselines perform noticeably worse on the electricity cost minimization task.  In this respect, the result in \autoref{fig:locations} and the observations from \autoref{fig:elec_cdf} are a strength for the \UQAdv algorithm -- using information about the uncertainty in the electricity price forecasts, \UQAdv is able to stay substantially closer to the hindsight optimal \ROAdv with $\lambda = \lambda ^ \star$ while other baselines degrade.
To further underscore this point, prior work has identified that electricity price signals can be more variable and difficult to predict compared to (average) carbon intensity signals~\cite{Bovornkeeratiroj:25}, amplifying the benefits of uncertainty-quantification and integrating UQ forecasts into decision-making.

Across all tested traces, \UQAdv continues to outperform the uncertainty-agnostic \texttt{RO-Advice} with $\lambda = 0.5$, improving by an average of 7.62\%.  \UQAdv also improves on \RORO by an average of 15.47\%, and improves on \texttt{Threshold} by an average of 19.42\%, and remains within 4.89\% of the performance of the hindsight optimal \ROAdv with $\lambda = \lambda ^ \star$.

\smallskip
\noindent \textbf{Key Takeaways. \ }
Compared to \ROAdv with a fixed ``trust parameter'' ($\lambda = 0.5$), our experimental results demonstrate that \UQAdv consistently achieves performance significantly closer to a baseline that has been fine-tuned with the perfect ``trust parameter'' in hindsight (\ROAdv with $\lambda = \lambda^\star$).  For practitioners, this provides a valuable and concrete benefit of proactively considering uncertainty-quantified forecasts in decision-making, as most deployed systems are not likely to undergo regular fine-tuning.
\UQAdv also outperforms standard robust baselines (i.e., designed using classic competitive analysis) for the problem, and the relative ordering of algorithm performances remains stable across diverse real-world signal traces.

In synthetically-generated scenarios with high forecast uncertainty (see \autoref{fig:uncertainty}),
\UQAdv's ability to leverage UQ information allows it to even surpass the approach that was perfectly-tuned (in hindsight) on UQ forecasts provided in the traces.  These scenarios can be thought of as ``distribution shift'' simulations that approximate what would happen if an algorithm fine-tuned in one scenario is suddenly faced with a substantially different (i.e., out-of-distribution) environment.  In these cases, \UQAdv's consideration of uncertainty-quantified forecasts allows it to automatically adapt and match the best robust algorithm that does not use forecasts (\RORO).

\section{Conclusion}
\label{sec:conclusion}

We introduce \UQAdv, a novel learning-augmented algorithm for online workload shifting that systematically leverages modern uncertainty-quantified (UQ) predictors.  Ours is the first work, to our knowledge, to systematically consider complex multi-time-step uncertainty-quantification for online algorithm design.
At its core is the decision uncertainty score (\DUS) measure, which translates forecast uncertainty into decision variability, allowing the algorithm to automatically balance trust between learned advice and a robust online baseline without manual tuning. We provide rigorous theoretical guarantees for the algorithm's 
consistency, robustness, and a new UQ-robustness metric, showing that performance degrades gracefully as uncertainty increases. Trace-driven experiments on real-world carbon and electricity price data confirm that \UQAdv outperforms existing methods, particularly in high-uncertainty scenarios, demonstrating the practical benefits of integrating uncertainty-aware predictions into online algorithms.

Several interesting avenues remain for future work.  We introduce the decision uncertainty score (\DUS) measure as a useful tool to translate the complex uncertainty sets defined by multi-time-step UQ forecasts into an actionable metric for decision-making and theoretical analysis -- however, the optimization problem that defines it is non-convex and thus difficult to analyze in a theoretically rigorous manner.  For this and other online algorithms problems, it would be interesting to explore other ideas that similarly ``map'' from complex uncertainty-quantified advice to some simpler, actionable metrics for decision making, without sacrificing properties that facilitate analysis.
There also remain interesting applications in sustainability that are not fully modeled by the \ROCU problem considered in this paper, such as signal-aware workload shifting for multiple simultaneous workloads with heterogeneous deadlines (particularly relevant for data centers).  We speculate that some of the high-level ideas considered in this paper may generalize to such settings, and it would be interesting to consider how uncertainty-quantified forecasts might inform decision-making in these more complicated settings.

\begin{acks}
This research is supported by National Science Foundation grants 2045641, 2325956, 2512128, and 2533814, and the U.S. Department of Energy, Office of Science, Office of Advanced Scientific Computing Research, Department of Energy Computational Science Graduate Fellowship under Award Number DE-SC0024386.

We thank Nico Christianson for several helpful discussions. We also thank Saud Alghumayjan, Ming Yi, and Bolun Xu~\cite{Alghumayjan:24}, and Amy Li, Sihang Liu, and Yi Ding~\cite{Li:24} for providing data that we used in our evaluation. 
\end{acks}

\bibliographystyle{ACM-Reference-Format}
\bibliography{main}

\appendix
\onecolumn
\section*{Appendix}

\section{Deferred Proofs from \autoref{sec:leverage}} \label{apx:proof-leverage}

In what follows, we provide the deferred proof of \autoref{thm:global-optimum-dus}, which states that if $y^\star$ denotes the true global optimal value of $\DUS(\mathcal{U}, \mathbf{p})$ (in \eqref{eq:dus}), a global non-convex optimizer can find a solution $\hat{y}$ that satisfies the following inequality:
\[
y^\star \leq \hat{y} + \varepsilon,
\]
for any fixed $\varepsilon > 0$ in finite time, using $O\left( \left( \frac{\sqrt{T} \cdot \text{diam}(\mathcal{U})}{2 \lambda \cdot \varepsilon} \right)^T \right)$ iterations~\cite{Malherbe:17,  Bachoe:21}.

\begin{proof}[Proof of \autoref{thm:global-optimum-dus}]
Let $y^\star$ be defined as follows:
\[
y^\star \coloneqq \DUS(\mathcal{U}, \mathbf{p}) \coloneqq \max_{\mathbf{z} \in \mathcal{U}} \Vert \OPT(\hat{\mathbf{p}}) - \OPT(\mathbf{z})\Vert_1,
\]
for arbitrary $\mathcal{U}$ and $\mathbf{p}$.

We start by showing that the function $f(\mathbf{z}) = \Vert \OPT(\hat{\mathbf{p}}) - \OPT(\mathbf{z})\Vert_1$ is Lipschitz continuous.  
To do so, consider the mapping $\OPT(\mathbf{z}) : \mathcal{U} \to [0,1]^T$.  First, by the first-order optimality condition (and since \eqref{eq:obj} is strongly convex, admitting a unique optimal solution), we have the following useful facts.

Consider two arbitrary inputs $\mathbf{z}_1$ and $\mathbf{z}_2$.  Let $\mathbf{x}^\star_1 = \OPT(\mathbf{z}_1)$ and $\mathbf{x}^\star_2 = \OPT(\mathbf{z}_2)$ be the corresponding optimal solutions to the \ROCU problem.  Let $g(\mathbf{x}, \mathbf{z})$ denote the objective in \eqref{eq:obj}.  Note that the subdifferential of $g$ with respect to $\mathbf{x}$ is $\partial_{\mathbf{x}} g(\mathbf{x}, \mathbf{z}) = \mathbf{z} + \partial R(\mathbf{x})$, where $R(\mathbf{x}) = \sum_{t=1}^{T+1} \beta \vert x_t - x_{t-1} \vert + \lambda \Vert \mathbf{x} \Vert_2^2$ collects the non-linear terms in the objective.

By the first-order optimality condition, at the point of optimality there must exist a subgradient $\mathbf{j} \in \partial_{\mathbf{x}} g(\mathbf{x}^\star_1, \mathbf{z}_1)$ such that
\[\langle \mathbf{j}, \mathbf{y} - \mathbf{x}^\star_1 \rangle \geq 0, \quad \forall \mathbf{y} \in \mathcal{X},\]
where $\mathcal{X}$ is the feasible set of decisions.  

Then for $\mathbf{x}^\star_1$ and $\mathbf{x}^\star_2$, letting $\mathbf{h}_1 \in \partial_{\mathbf{x}} R(\mathbf{x}^\star_1)$ and $\mathbf{h}_2 \in \partial_{\mathbf{x}} R(\mathbf{x}^\star_2)$, we have
\begin{align*}
    \langle \mathbf{z}_1 + \mathbf{h}_1, \mathbf{x}^\star_2 - \mathbf{x}^\star_1 \rangle & \geq 0, \\
    \langle \mathbf{z}_2 + \mathbf{h}_2, \mathbf{x}^\star_1 - \mathbf{x}^\star_2 \rangle & \geq 0.
\end{align*}
Adding these two inequalities, we have
\[\langle \mathbf{z}_1 - \mathbf{z}_2, \mathbf{x}^\star_2 - \mathbf{x}^\star_1 \rangle + \langle \mathbf{h}_1 - \mathbf{h}_2, \mathbf{x}^\star_2 - \mathbf{x}^\star_1 \rangle \geq 0.\]
Rearranging, we have
\[\langle \mathbf{z}_1 - \mathbf{z}_2, \mathbf{x}^\star_2 - \mathbf{x}^\star_1 \rangle \geq - \langle \mathbf{h}_1 - \mathbf{h}_2, \mathbf{x}^\star_2 - \mathbf{x}^\star_1 \rangle.\]

Recall that the definition of strong convexity states that a function $F(\mathbf{x})$ is $\mu$-strongly convex if for all $\mathbf{x}, \mathbf{y}$ and for any corresponding subgradients $\mathbf{h}_x \in \partial F(\mathbf{x})$ and $\mathbf{h}_y \in \partial F(\mathbf{y})$, we have
\[\langle \mathbf{h}_x - \mathbf{h}_y, \mathbf{x} - \mathbf{y} \rangle \geq \mu \Vert \mathbf{x} - \mathbf{y} \Vert_2^2.\]
Since $R(\mathbf{x})$ is strongly convex because it is the sum of a convex function and a $2\lambda$-strongly convex function, we can apply this definition to obtain the following:
\[\langle \mathbf{h}_1 - \mathbf{h}_2, \mathbf{x}^\star_2 - \mathbf{x}^\star_1 \rangle \geq 2\lambda \Vert \mathbf{x}^\star_2 - \mathbf{x}^\star_1 \Vert_2^2.\]
Plugging this into the previous inequality, we have
\[\langle \mathbf{z}_1 - \mathbf{z}_2, \mathbf{x}^\star_2 - \mathbf{x}^\star_1 \rangle \geq 2\lambda \Vert \mathbf{x}^\star_2 - \mathbf{x}^\star_1 \Vert_2^2.\]
Applying the Cauchy-Schwarz inequality to the left-hand side, we have
\[\Vert \mathbf{z}_1 - \mathbf{z}_2 \Vert_2 \cdot \Vert \mathbf{x}^\star_2 - \mathbf{x}^\star_1 \Vert_2 \geq 2\lambda \Vert \mathbf{x}^\star_2 - \mathbf{x}^\star_1 \Vert_2^2.\]
If $\mathbf{x}^\star_2 = \mathbf{x}^\star_1$, the condition is trivially satisfied.  Otherwise, we can divide both sides by $\Vert \mathbf{x}^\star_2 - \mathbf{x}^\star_1 \Vert_2$ and rearrange to obtain
\[\Vert \mathbf{x}^\star_2 - \mathbf{x}^\star_1 \Vert_2 \leq \frac{1}{2\lambda} \Vert \mathbf{z}_1 - \mathbf{z}_2 \Vert_2.\]
This shows that the mapping $\OPT(\mathbf{z})$ is $\frac{1}{2\lambda}$-Lipschitz continuous.

Then, we show that $f(\mathbf{z}) = \Vert \OPT(\hat{\mathbf{p}}) - \OPT(\mathbf{z})\Vert_1$ is Lipschitz continuous.  We want to show that there exists a constant $K$ such that for all $\mathbf{z}_1, \mathbf{z}_2 \in \mathcal{U}$,
\[\vert f(\mathbf{z}_1) - f(\mathbf{z}_2) \vert \leq K \Vert \mathbf{z}_1 - \mathbf{z}_2 \Vert_2.\]
By the reverse triangle inequality, we have:
\begin{align*}
    \vert f(\mathbf{z}_1) - f(\mathbf{z}_2) \vert & = \vert \Vert \OPT(\hat{\mathbf{p}}) - \OPT(\mathbf{z}_1)\Vert_1 - \Vert \OPT(\hat{\mathbf{p}}) - \OPT(\mathbf{z}_2)\Vert_1 \vert \\
    & \leq \Vert (\OPT(\hat{\mathbf{p}}) - \OPT(\mathbf{z}_1)) - (\OPT(\hat{\mathbf{p}}) - \OPT(\mathbf{z}_2)) \Vert_1 \\
    & = \Vert \OPT(\mathbf{z}_2) - \OPT(\mathbf{z}_1) \Vert_1.
\end{align*}
Using the result that $\OPT(\mathbf{z})$ is $\frac{1}{2\lambda}$-Lipschitz continuous and equivalence of norms, we have
\[\Vert \OPT(\mathbf{z}_2) - \OPT(\mathbf{z}_1) \Vert_1 \leq \sqrt{T} \Vert \OPT(\mathbf{z}_2) - \OPT(\mathbf{z}_1) \Vert_2 \leq \frac{\sqrt{T}}{2\lambda} \Vert \mathbf{z}_2 - \mathbf{z}_1 \Vert_2.\]
Thus, $f(\mathbf{z})$ is Lipschitz continuous with Lipschitz constant $K = \frac{\sqrt{T}}{2\lambda}$.

Having shown that $f(\mathbf{z})$ is a non-concave Lipschitz continuous function, we can apply standard results on global non-convex optimization~\cite{Malherbe:17,  Bachoe:21} to conclude that a global non-convex optimizer is able to find a solution $\hat{y}$ that satisfies the following inequality:
\[y^\star \leq \hat{y} + \varepsilon,\]
for any fixed $\varepsilon > 0$ in finite time, using $O\left( \left( \frac{K \cdot \text{diam}(\mathcal{X})}{\varepsilon} \right)^T \right)$ iterations, where $\text{diam}(\mathcal{X})$ is the diameter of the feasible decision set $\mathcal{X}$, and $K$ is the Lipschitz constant of the non-concave objective.  Plugging in the Lipschitz constant $K = \frac{\sqrt{T}}{2\lambda}$ and letting $\mathcal{U}$ denote the feasible set of $\mathbf{z}$ gives the result.

\end{proof}

\section{Deferred Proofs from \autoref{sec:alg}} \label{apx:proof-alg}

In what follows, we provide deferred proofs for the theorems in \autoref{sec:alg}.

\subsection{Proof of \autoref{thm:roro-comp-rocu}} \label{apx:proof-roro}
We begin by proving \autoref{thm:roro-comp-rocu}, which states that the \RORO algorithm given by \cite{Lechowicz:24} and summarized in \sref{Algorithm}{alg:roro} is $\alpha$-competitive for \ROCU, where $\alpha$ is defined as:
\begin{equation}
    \alpha \coloneqq \frac{T(\alpha_{\RORO} \cdot p_{\min} + \lambda)}{T p_{\min} + \lambda},
\end{equation}
where $\alpha_{\RORO}$ is defined in \eqref{eq:alpha_roro}, $T$ is the time horizon, $p_{\min}$ is the minimum price, and $\lambda$ is the regularization parameter.  Note that as $\lambda \rightarrow 0$, we have $\alpha \rightarrow \alpha_{\RORO}$.

\begin{proof}[Proof of \autoref{thm:roro-comp-rocu}]
We prove the result by extending results in \cite{Lechowicz:24} to account for the regularization term.  Let $\RORO(\mathcal{I})$ be the cost incurred by \RORO on instance $\mathcal{I}$, and let $\OPT(\mathcal{I})$ be the optimal offline cost for the same instance.  

First, we have the following lemma to lower bound the cost of \OPT:
\begin{lemma} \label{lem:opt-lb}
    For any instance $\mathcal{I}$ of \ROCU, we have:
    \begin{equation}
        \Cost(\OPT(\mathcal{I})) \geq \phi(w_j) - \beta + \frac{\lambda}{T}
    \end{equation}
    where $w_j$ is the final utilization of \RORO (before the compulsory trade).
\end{lemma}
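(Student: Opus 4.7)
The plan is to reduce the regularized case back to the classical \OCS setting from \cite{Lechowicz:24} by decomposing the cost additively and handling the regularizer separately. The first crucial observation is that \RORO's trajectory (and in particular its final pre-compulsory utilization $w_j$) is determined entirely by the price sequence and the threshold function $\phi$; it is independent of $\lambda$. Hence the value $w_j$ coincides with what \RORO would produce on the non-regularized instance, and the threshold-based lower bound used in the proof of Theorem 3 of \cite{Lechowicz:24} applies unchanged: the optimal non-regularized cost is at least $\phi(w_j)-\beta$. This is the ``pseudo-cost accounting'' lemma that underlies competitiveness of every threshold-based one-way trading algorithm, so I would invoke it as a black box.

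Next I would let $\mathbf{x}^\star \coloneqq \OPT(\mathcal{I})$ denote the optimal regularized solution and split
\[
\Cost(\OPT(\mathcal{I})) \;=\; \underbrace{\sum_{t=1}^T p_t x^\star_t + \sum_{t=1}^{T+1} \beta|x^\star_t - x^\star_{t-1}|}_{=: C_{\mathrm{base}}(\mathbf{x}^\star)} \;+\; \lambda \|\mathbf{x}^\star\|_2^2.
\]
Since $\mathbf{x}^\star$ is feasible for the non-regularized problem (same constraints), $C_{\mathrm{base}}(\mathbf{x}^\star)$ is at least the optimal cost of the non-regularized \OCS instance, which by the invoked lemma is $\geq \phi(w_j)-\beta$.

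For the regularization term I would apply Cauchy–Schwarz (equivalently Jensen's inequality on $x\mapsto x^2$) to the deadline constraint $\sum_t x^\star_t = 1$:
\[
1 \;=\; \Bigl(\sum_{t=1}^T x^\star_t\Bigr)^2 \;\leq\; T \sum_{t=1}^T (x^\star_t)^2,
\]
so $\|\mathbf{x}^\star\|_2^2 \geq \tfrac{1}{T}$, giving $\lambda\|\mathbf{x}^\star\|_2^2 \geq \lambda/T$. Summing the two lower bounds yields $\Cost(\OPT(\mathcal{I})) \geq \phi(w_j) - \beta + \lambda/T$, as desired.

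The main obstacle is really just making precise the first step, i.e., ensuring that the threshold-based lower bound of \cite{Lechowicz:24} still applies in our setting. The subtlety is that (i) \RORO's dynamic threshold $\phi$ here uses the same $\alpha_{\RORO}$ as in \OCS and makes no reference to $\lambda$, and (ii) the lemma one cites bounds the non-regularized OPT cost, not the regularized one. Both concerns dissolve once one observes that non-regularized cost is a lower bound on the regularized cost of the same feasible vector, so the argument cleanly factors into the two additive pieces above. Everything else is a routine Cauchy–Schwarz calculation.
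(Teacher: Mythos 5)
Your proof is correct and takes essentially the same approach as the paper: both invoke the threshold-based lower bound of \cite{Lechowicz:24} as a black box for the non-regularized portion of the cost, and both lower-bound the regularizer by $\lambda/T$ using the fact that the sum of squares subject to $\sum_t x_t = 1$ is minimized at the uniform allocation. Your write-up is, if anything, slightly more careful in making explicit that the regularized optimum is feasible for the non-regularized problem and that $w_j$ does not depend on $\lambda$ (points the paper's proof glosses over).
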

\begin{proof}
First, by \cite[Lemma 1]{Lechowicz:24}, we know that the optimal offline solution's cost due to the first two terms of the objective function (i.e., excluding the regularization term) is at least $\textbf{OPT}(\mathcal{I}) \phi(w_j) - \beta$.  

Next, we need to account for the regularization term $\lambda \Vert \mathbf{x} \Vert_2^2$.  To do so, note that by the constraints of the problem, we know that $\sum_{t=1}^T x_t = 1$. 
Then we have:
\begin{align*}
\Vert \mathbf{x} \Vert_2^2 &= \sum_{t=1}^T x_t^2,\\
&\geq \sum_{t=1}^T \left(\frac{1}{T}\right)^2 = \frac{1}{T},
\end{align*}
where the second inequality follows because the sum of squares is minimized when all $x_t$ are equal.
Thus, the regularization term contributes at least $\lambda / T$ to the cost of \OPT.  Combining this with the previous lower bound gives the result.
\end{proof}

Next, we have the following lemma to upper bound the cost of \RORO:
\begin{lemma} \label{lem:roro-ub}
    For any instance $\mathcal{I}$ of \ROCU, we have:
    \begin{equation}
        \Cost(\RORO(\mathcal{I})) \leq \int_0^{w_j} \phi(u) du + \beta + (1-w_j) p_{\max} + \lambda
    \end{equation}
    where $w_j$ is the final utilization of \RORO (before the compulsory trade).
\end{lemma}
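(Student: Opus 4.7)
The plan is to decompose $\Cost(\RORO(\mathcal{I}))$ into three components matching the three terms in the objective \eqref{eq:obj}: the signal-plus-switching cost accumulated during the pseudo-cost phase (time steps $1,\dots,j$), the signal-plus-switching cost accumulated during the compulsory trade phase (time steps $j+1,\dots,T$), and the regularizer $\lambda\Vert \mathbf{x} \Vert_2^2$. I would bound each piece separately and then sum them to recover the stated inequality.

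For the first piece, I would invoke the existing pseudo-cost analysis of \RORO from \cite{Lechowicz:24}, which (when specialized to the non-regularized objective) shows that for each online decision $x_t$ made by the pseudo-cost minimization, the marginal contribution to the signal-plus-switching cost is bounded above by $\int_{w_{t-1}}^{w_{t-1}+x_t} \phi(u)\,du$ plus boundary switching terms. Telescoping these bounds over $t = 1,\dots,j$ yields an aggregate bound of $\int_0^{w_j} \phi(u)\,du + \beta$, where the additive $\beta$ absorbs the single ``ramp-down'' switching cost incurred when \RORO transitions between nonzero decisions (this is exactly the content of the cost accounting in \cite[Lemma 2 or equivalent]{Lechowicz:24}, carried over verbatim since the pseudo-cost subroutine is identical).

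For the second piece, during the compulsory trade the algorithm completes the remaining $1 - w_j$ fraction of the workload. By the assumption $p_t \leq p_{\max}$ and the fact that the switching cost contribution is already absorbed in the first-piece $\beta$ term (or can be upper-bounded by the same constant since $\beta < (p_{\max}-p_{\min})/2$ under our assumptions), the total signal cost during this phase is at most $(1-w_j)\,p_{\max}$. For the third piece, I will use the key observation that since $x_t \in [0,1]$ and $\sum_{t=1}^{T} x_t = 1$ (\RORO completes the workload exactly), we have
\[
\Vert \mathbf{x} \Vert_2^2 = \sum_{t=1}^T x_t^2 \leq \sum_{t=1}^T x_t = 1,
\]
because $x_t^2 \leq x_t$ whenever $x_t \in [0,1]$. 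Hence the regularizer contributes at most $\lambda$ to the total cost. Summing the three bounds gives the claimed inequality.

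The main obstacle I anticipate is the careful bookkeeping of switching costs across the boundary between the pseudo-cost phase and the compulsory trade phase: I need to ensure that the single $\beta$ additive term in the stated bound is indeed sufficient to cover all ``ramp-on'' and ``ramp-off'' transitions, including the potential jump at time $j+1$ when the compulsory trade begins and potentially makes a maximal decision $x_{j+1} = d_{j+1}$ that differs from $x_j$. This should follow from the same telescoping argument used in \cite{Lechowicz:24}, but it is the one step where the regularizer and the rate constraint could in principle interact with the switching term in a nontrivial way, so it deserves explicit verification rather than a simple appeal to the prior work.
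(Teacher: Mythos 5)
Your proposal is correct and follows essentially the same route as the paper: the paper likewise invokes \cite[Lemma 2]{Lechowicz:24} to bound the signal-plus-switching cost (including the compulsory trade) by $\int_0^{w_j}\phi(u)\,du + \beta + (1-w_j)p_{\max}$, and then separately bounds the regularizer by $\lambda$ using $\Vert\mathbf{x}\Vert_2^2 \le 1$ under the constraint $\sum_t x_t = 1$. Your justification of that last step via $x_t^2 \le x_t$ for $x_t \in [0,1]$ is, if anything, a cleaner argument than the paper's appeal to the maximizer being a vertex of the simplex, and your flagged concern about switching-cost bookkeeping at the phase boundary is handled in the paper simply by deferring to the prior work's lemma.
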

\begin{proof}
First, by \cite[Lemma 2]{Lechowicz:24}, we know that the cost of \RORO due to the first two terms of the objective function (i.e., excluding the regularization term) is at most $\int_0^{w_j} \phi(u) du + \beta + (1-w_j) p_{\max}$.  Next, we need to account for the regularization term $\lambda \Vert \mathbf{x} \Vert_2^2$.  To do so, note that by the constraints of the problem, we know that $\sum_{t=1}^T x_t = 1$.
Then we have:
\begin{align*}
\Vert \mathbf{x} \Vert_2^2 &= \sum_{t=1}^T x_t^2,\\
&\leq 1^2 = 1,
\end{align*}
where the second inequality follows because the sum of squares is maximized when all $x_t$ are zero except one $x_t$ which is equal to 1.
Thus, the regularization term contributes at most $\lambda$ to the cost of \RORO.  Combining this with the previous upper bound gives the result.
\end{proof}

By the proof of \cite[Theorem 1]{Lechowicz:24}, we know that:
\begin{equation}
    \int_0^{w} \phi(u) du + \beta + (1-w) p_{\max} = \alpha_{\RORO} (\phi(w) - \beta), \quad \forall w \in [0,1].
\end{equation}
Combining this with \sref{Lemma}{lem:roro-ub}, we have:
\begin{align*}
    \Cost(\RORO(\mathcal{I})) &\leq \alpha_{\RORO} (\phi(w_j) - \beta) + \lambda.\\
\end{align*}
Finally, combining this with \sref{Lemma}{lem:opt-lb}, we have:
\begin{align*}
    \frac{\Cost(\RORO(\mathcal{I}))}{\Cost(\OPT(\mathcal{I}))} &\leq \frac{\alpha_{\RORO} (\phi(w_j) - \beta) + \lambda}{\phi(w_j) - \beta + \frac{\lambda}{T}}, \\
    &\leq \frac{\alpha_{\RORO} p_{\min} + \lambda}{p_{\min} + \frac{\lambda}{T}},\\
    &\leq \frac{T(\alpha_{\RORO} p_{\min} + \lambda)}{T p_{\min} + \lambda} = \alpha.
\end{align*}
This completes the proof.
\end{proof}

\subsection{Proof of \autoref{thm:uq-consistency}} \label{apx:proof-consistency}

Next, we prove \autoref{thm:uq-consistency}, which states that \UQAdv is $\eta$-consistent under the assumption that the point predictions are perfect (i.e., $\hat{\mathbf{p}} = \mathbf{p}$), where:
\begin{equation}
    \eta \coloneqq 1 + (\alpha - 1) \frac{\DUS(\mathcal{U}, \mathbf{\hat{p}})}{2} ,
\end{equation}
where $\alpha$ is the competitive ratio of \RORO for \ROCU given in \autoref{thm:roro-comp-rocu}. 

\begin{proof} [Proof of \autoref{thm:uq-consistency}]
To show consistency, first recall that the point predictions are perfect, meaning that $\hat{p}_t = p_t$ for all $t$.  
In this case, note that $\OPT(\mathcal{I}) = \{\hat{x}_t\}_{t=1}^T$, since the optimal decisions under perfect predictions (i.e., $\OPT(\hat{\mathbf{p}})$) are exactly correct.

Then, from the definition of \UQAdv, we have the following upper bound on the total cost incurred by \UQAdv.  We let $r \coloneqq \Vert \mathbf{\hat{x}} \Vert_2^2$ denote the regularization cost paid by \OPT for brevity.
\begin{align*}
    \Cost(\UQAdv(\mathcal{I})) &= \sum_{t=1}^{T} p_t x_t + \sum_{t=1}^{T+1} \beta \vert x_t - x_{t-1} \vert + \lambda \Vert \mathbf{x} \Vert_2^2,\\
    &= \sum_{t=1}^{T} p_t(\gamma \hat x_t + (1-\gamma) \tilde x_t) + \sum_{t=1}^{T+1} \beta|\gamma \hat x_t + (1-\gamma) \tilde x_t - \gamma \hat x_{t-1} - (1-\gamma) \tilde x_{t-1}| + \lambda \Vert \gamma \hat{\mathbf{x}} + (1-\gamma) \tilde{\mathbf{x}} \Vert_2^2, \\
    &\leq \gamma \sum_{t=1}^{T} p_t \hat x_t + (1-\gamma) \sum_{t=1}^T p_t \tilde x_t + \sum_{t=1}^{T+1} \beta|\gamma (\hat x_t - \hat x_{t-1}) + (1-\gamma) (\tilde x_t - \tilde x_{t-1})| + \lambda (\gamma^2 r + (1-\gamma)^2 \Vert \tilde{\mathbf{x}} \Vert_2^2 + 2\gamma(1-\gamma) \langle \hat{\mathbf{x}}, \tilde{\mathbf{x}} \rangle) \\
\end{align*}
By triangle inequality, the switching cost term is upper bounded as follows:
\begin{align*}
    \Cost(\UQAdv(\mathcal{I})) &\leq \gamma \sum_{t=1}^{T} p_t \hat x_t + (1-\gamma) \sum_{t=1}^T p_t \tilde x_t + \gamma \sum_{t=1}^{T+1} \beta|\hat x_t + \hat x_{t-1}| + (1-\gamma)\sum \beta | \tilde x_t - \tilde x_{t-1}| + \lambda (\gamma^2 r + (1-\gamma)^2 \Vert \tilde{\mathbf{x}} \Vert_2^2 + 2\gamma(1-\gamma) \langle \hat{\mathbf{x}}, \tilde{\mathbf{x}} \rangle) \\
\end{align*}
Using Cauchy-Schwarz on the final term, the above simplifies to:
\begin{align*}
    \Cost(\UQAdv(\mathcal{I})) &\leq \gamma \sum_{t=1}^{T} p_t \hat x_t + (1-\gamma) \sum_{t=1}^T p_t \tilde x_t + \gamma \sum_{t=1}^{T+1} \beta|\hat x_t + \hat x_{t-1}| + (1-\gamma)\sum \beta | \tilde x_t - \tilde x_{t-1}| + \lambda (\gamma \sqrt{r} + 1 - \gamma)^2 \\
\end{align*}
With a slight abuse of notation, let $\OPT_{\text{bought}}(\mathcal{I})$, $\OPT_{\text{switch}}(\mathcal{I})$, and $\OPT_{\text{reg}}(\mathcal{I}) \coloneqq r \cdot \lambda$ denote the cost of \OPT due to the buying, switching, and regularization costs, respectively.  Similarly, let $\RORO_{\text{bought}}(\mathcal{I})$, $\RORO_{\text{switch}}(\mathcal{I})$, and $\RORO_{\text{reg}}(\mathcal{I}) \coloneqq 1 \cdot \lambda$ denote the (worst-case) cost of \RORO due to the buying, switching, and regularization costs, respectively.  Then, we have:
\begin{align*}
    \Cost(\UQAdv(\mathcal{I})) &\leq \gamma \OPT_{\text{bought}}(\mathcal{I}) + (1-\gamma) \RORO_{\text{bought}}(\mathcal{I}) + \gamma \OPT_{\text{switch}}(\mathcal{I}) + (1-\gamma)\RORO_{\text{switch}}(\mathcal{I}) + \lambda \left( \gamma \sqrt{r} + \left( 1 - \gamma \right) \sqrt{1} \right)^2 \\
\end{align*}
By Jensen's inequality, we have:
$$\left( \gamma \sqrt{r} + (1-\gamma) \sqrt{1} \right)^2 \leq \gamma r + (1-\gamma) 1.$$
Combining this with the above gives:
\begin{align*}
    \Cost(\UQAdv(\mathcal{I})) &\leq \gamma \OPT_{\text{bought}}(\mathcal{I}) + (1-\gamma) \RORO_{\text{bought}}(\mathcal{I}) + \gamma \OPT_{\text{switch}}(\mathcal{I}) + (1-\gamma)\RORO_{\text{switch}}(\mathcal{I}) + \lambda (\gamma r + (1-\gamma) 1) \\
    &\leq \gamma (\OPT_{\text{bought}}(\mathcal{I}) + \OPT_{\text{switch}}(\mathcal{I}) + \OPT_{\text{reg}}(\mathcal{I})) + (1-\gamma) (\RORO_{\text{bought}}(\mathcal{I}) + \RORO_{\text{switch}}(\mathcal{I}) + \RORO_{\text{reg}}(\mathcal{I})) \\
    &\leq \gamma \cdot \Cost(\OPT(\mathcal{I})) + (1-\gamma) \cdot \Cost(\RORO(\mathcal{I})).
\end{align*}
Since \RORO is $\alpha$-competitive for \ROCU by \autoref{thm:roro-comp-rocu}, we have:
\begin{align*}
    \Cost(\UQAdv(\mathcal{I})) &\leq \gamma \cdot \Cost(\OPT(\mathcal{I})) + (1-\gamma) \cdot \alpha \cdot \Cost(\OPT(\mathcal{I})), \\
    &= (\gamma + (1-\gamma) \alpha) \cdot \Cost(\OPT(\mathcal{I})).
\end{align*}
In terms of \DUS, this is equivalently stated as:
\begin{align}
    \Cost(\UQAdv(\mathcal{I})) &\leq \left( 1 + (\alpha -1) \frac{\DUS(\mathcal{U}, \mathbf{\hat{p}})}{2} \right) \cdot \Cost(\OPT(\mathcal{I})).
\end{align}
This completes the proof.
\end{proof}

\subsection{Proof of \autoref{thm:uq-robustness}} \label{apx:proof-robustness}

Next, we prove \autoref{thm:uq-robustness}, which states that under the assumption that \textit{both} the point predictions and the uncertainty quantification are completely (i.e., adversarially) incorrect, \UQAdv is $\zeta$-robust, where:
    \begin{equation}
        \zeta \coloneqq \left( 1 - \frac{\DUS(\mathcal{U}, \hat{\mathbf{p}})}{2} \right) \cdot \frac{p_{\max} + 2\beta + \lambda}{p_{\min} + \frac{\lambda}{T}} + \left( \frac{\DUS(\mathcal{U}, \hat{\mathbf{p}})}{2} \right) \cdot \alpha,
    \end{equation}
    where $\alpha$ is the competitive ratio of \RORO for \ROCU given in \autoref{thm:roro-comp-rocu}. 

\begin{proof} [Proof of \autoref{thm:uq-robustness}]
To show robustness, first recall that \textit{both} the point predictions and the uncertainty estimates are completely (i.e., adversarially) incorrect. 

In this case, we denote $\ADV(\mathcal{I}) = \{\hat{x}_t\}_{t=1}^T$, as the decisions that assumed the point predictions (i.e., $\OPT(\hat{\mathbf{p}})$) were exactly correct.  Note that the worst-case cost of any $\ADV(\mathcal{I})$ is upper bounded as follows:
\begin{align*}
    \Cost(\ADV(\mathcal{I})) &\leq p_{\max} + 2\beta + \lambda,
\end{align*}
where the first term is the maximum possible buying cost (i.e., buying everything at the highest price), the second term is the maximum possible switching cost, and the third term is the maximum possible regularization cost.

Then, from the definition of \UQAdv, we have the following upper bound on the total cost incurred by \UQAdv. 
\begin{align*}
    \Cost(\UQAdv(\mathcal{I})) &= \sum_{t=1}^{T} p_t x_t + \sum_{t=1}^{T+1} \beta \vert x_t - x_{t-1} \vert + \lambda \Vert \mathbf{x} \Vert_2^2,\\
    &= \sum_{t=1}^{T} p_t(\gamma \hat x_t + (1-\gamma) \tilde x_t) + \sum_{t=1}^{T+1} \beta|\gamma \hat x_t + (1-\gamma) \tilde x_t - \gamma \hat x_{t-1} - (1-\gamma) \tilde x_{t-1}| + \lambda \Vert \gamma \hat{\mathbf{x}} + (1-\gamma) \tilde{\mathbf{x}} \Vert_2^2, \\
    &\leq \gamma \sum_{t=1}^{T} p_t \hat x_t + (1-\gamma) \sum_{t=1}^T p_t \tilde x_t + \sum_{t=1}^{T+1} \beta|\gamma (\hat x_t - \hat x_{t-1}) + (1-\gamma) (\tilde x_t - \tilde x_{t-1})| + \lambda (\gamma^2 + (1-\gamma)^2 \Vert \tilde{\mathbf{x}} \Vert_2^2 + 2\gamma(1-\gamma) \langle \hat{\mathbf{x}}, \tilde{\mathbf{x}} \rangle) \\
\end{align*}
By triangle inequality, the switching cost term is upper bounded as follows:
\begin{align*}
    \Cost(\UQAdv(\mathcal{I})) &\leq \gamma \sum_{t=1}^{T} p_t \hat x_t + (1-\gamma) \sum_{t=1}^T p_t \tilde x_t + \gamma \sum_{t=1}^{T+1} \beta|\hat x_t + \hat x_{t-1}| + (1-\gamma)\sum \beta | \tilde x_t - \tilde x_{t-1}| + \lambda (\gamma^2 + (1-\gamma)^2 \Vert \tilde{\mathbf{x}} \Vert_2^2 + 2\gamma(1-\gamma) \langle \hat{\mathbf{x}}, \tilde{\mathbf{x}} \rangle) \\
\end{align*}
Using Cauchy-Schwarz on the final term, the above simplifies to:
\begin{align*}
    \Cost(\UQAdv(\mathcal{I})) &\leq \gamma \sum_{t=1}^{T} p_t \hat x_t + (1-\gamma) \sum_{t=1}^T p_t \tilde x_t + \gamma \sum_{t=1}^{T+1} \beta|\hat x_t + \hat x_{t-1}| + (1-\gamma)\sum \beta | \tilde x_t - \tilde x_{t-1}| + \lambda (\gamma + 1 - \gamma)^2 \\
\end{align*}
With a slight abuse of notation, let $\ADV_{\text{bought}}(\mathcal{I})$, $\ADV_{\text{switch}}(\mathcal{I})$, and $\ADV_{\text{reg}}(\mathcal{I}) \coloneqq r \cdot \lambda$ denote the cost of \ADV due to the buying, switching, and regularization costs, respectively.  Similarly, let $\RORO_{\text{bought}}(\mathcal{I})$, $\RORO_{\text{switch}}(\mathcal{I})$, and $\RORO_{\text{reg}}(\mathcal{I}) \coloneqq 1 \cdot \lambda$ denote the (worst-case) cost of \RORO due to the buying, switching, and regularization costs, respectively.  Then, we have:
\begin{align*}
    \Cost(\UQAdv(\mathcal{I})) &\leq \gamma \ADV_{\text{bought}}(\mathcal{I}) + (1-\gamma) \RORO_{\text{bought}}(\mathcal{I}) + \gamma \ADV_{\text{switch}}(\mathcal{I}) + (1-\gamma)\RORO_{\text{switch}}(\mathcal{I}) + \lambda (\gamma + (1-\gamma) 1) \\
    &\leq \gamma (\ADV_{\text{bought}}(\mathcal{I}) + \ADV_{\text{switch}}(\mathcal{I}) + \ADV_{\text{reg}}(\mathcal{I})) + (1-\gamma) (\RORO_{\text{bought}}(\mathcal{I}) + \RORO_{\text{switch}}(\mathcal{I}) + \RORO_{\text{reg}}(\mathcal{I})) \\
    &\leq \gamma \cdot \Cost(\ADV(\mathcal{I})) + (1-\gamma) \cdot \Cost(\RORO(\mathcal{I})).
\end{align*}
Since \RORO is $\alpha$-competitive for \ROCU by \autoref{thm:roro-comp-rocu}, we have:
\begin{align*}
    \Cost(\UQAdv(\mathcal{I})) &\leq \gamma \cdot \Cost(\OPT(\mathcal{I})) + (1-\gamma) \cdot \alpha \cdot \Cost(\OPT(\mathcal{I})).
\end{align*}
Then, since the worst-case for the \ADV term occurs when $\Cost(\ADV(\mathcal{I})) = p_{\max} + 2\beta + \lambda$ and $\Cost(\OPT(\mathcal{I})) \geq p_{\min} + \frac{\lambda}{T}$, we have:
\begin{align*}
    \Cost(\UQAdv(\mathcal{I})) &\leq \gamma \cdot \frac{p_{\max} + 2\beta + \lambda}{p_{\min} + \frac{\lambda}{T}} \cdot \Cost(\OPT(\mathcal{I})) + (1-\gamma) \cdot \alpha \cdot \Cost(\OPT(\mathcal{I})), \\
    &= \left( \gamma \cdot \frac{p_{\max} + 2\beta + \lambda}{p_{\min} + \frac{\lambda}{T}} + (1-\gamma) \cdot \alpha \right) \cdot \Cost(\OPT(\mathcal{I})).
\end{align*}
In terms of \DUS, this is equivalently stated as:
\begin{align}
    \Cost(\UQAdv(\mathcal{I})) &\leq \left( \left( 1 - \frac{\DUS(\mathcal{U}, \hat{\mathbf{p}})}{2} \right) \cdot \frac{p_{\max} + 2\beta + \lambda}{p_{\min} + \frac{\lambda}{T}} + \left( \frac{\DUS(\mathcal{U}, \hat{\mathbf{p}})}{2} \right) \cdot \alpha \right) \cdot \Cost(\OPT(\mathcal{I})).
\end{align}
This completes the proof.
\end{proof}

\subsection{Proof of \autoref{thm:uq-uqrobustness}} \label{apx:proof-uqrobustness}

Finally, we prove \autoref{thm:uq-uqrobustness}, which states that under the assumption that the uncertainty quantification is accurate (i.e., the true signals lie within the uncertainty intervals), but the point predictions may be imperfect, \UQAdv is $\theta$-UQ-robust, where:

{\small
\begin{equation*}
    \theta \coloneqq 1 + \left( \frac{\DUS(\mathcal{U}, \hat{\mathbf{p}})}{2} \right) \left( \alpha - 1 + \left( 1 - \frac{\DUS(\mathcal{U}, \hat{\mathbf{p}})}{2}\right) \frac{p_{\max} - p_{\min} + 4\beta + 2\lambda}{p_{\min} + \frac{\lambda}{T}} \right),
\end{equation*}
}

\noindent where $\alpha$ is the competitive ratio of \RORO for \ROCU given in \autoref{thm:roro-comp-rocu}.  

\begin{proof}[Proof of \autoref{thm:uq-uqrobustness}]
To show this result, recall that the point predictions are \textit{imperfect} (i.e., $\hat{\mathbf{p}} \neq \mathbf{p}$), but the uncertainty quantification is \textit{accurate} (i.e., $\mathbf{p} \in \mathcal{U}$).  We start by proving the following lemma that relates the cost of $\ADV(\mathcal{I}) \coloneqq \OPT(\mathbf{\hat{p}})$ to the cost of $\OPT(\mathcal{I})$:
\begin{lemma} \label{lem:adv-opt-gap}
Given the decision uncertainty score $\DUS(\mathcal{U}, \hat{\mathbf{p}})$, we have the following bound on the cost of $\ADV$.
\begin{equation}
    \Cost(\ADV(\mathcal{I})) \leq \Cost(\OPT(\mathcal{I})) + \frac{\DUS(\mathcal{U}, \hat{\mathbf{p}})}{2}(p_{\max} - p_{\min} + 4\beta + 2\lambda).
\end{equation}
\end{lemma}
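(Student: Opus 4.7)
The plan is to let $\mathbf{x}^\star \coloneqq \OPT(\mathcal{I})$ (the offline optimum against the true prices $\mathbf{p}$) and $\hat{\mathbf{x}} \coloneqq \ADV(\mathcal{I}) = \OPT(\hat{\mathbf{p}})$, and to set $\boldsymbol{\Delta} \coloneqq \hat{\mathbf{x}} - \mathbf{x}^\star$. Two facts drive the whole argument. First, both schedules satisfy the deadline constraint $\sum_t x_t = 1$, so $\sum_t \Delta_t = 0$. Second, because the uncertainty quantification is accurate, the true price vector $\mathbf{p}$ lies in the uncertainty set $\mathcal{U}$, so by the definition of the decision uncertainty score in \eqref{eq:dus},
\[
\|\boldsymbol{\Delta}\|_1 = \|\OPT(\hat{\mathbf{p}}) - \OPT(\mathbf{p})\|_1 \leq \DUS(\mathcal{U}, \hat{\mathbf{p}}).
\]
From here the goal is to upper bound $\Cost(\hat{\mathbf{x}}) - \Cost(\mathbf{x}^\star)$ term by term by something of the form $c\,\|\boldsymbol{\Delta}\|_1$ and conclude via the inequality above.

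For the signal term, I would write $\sum_t p_t(\hat{x}_t - x_t^\star) = \sum_t p_t \Delta_t$ and, using $\sum_t \Delta_t = 0$, re-center as $\sum_t (p_t - c)\Delta_t$ for $c = \tfrac{p_{\min}+p_{\max}}{2}$, so that $|p_t - c| \leq \tfrac{p_{\max}-p_{\min}}{2}$ and the term is bounded by $\tfrac{p_{\max}-p_{\min}}{2}\|\boldsymbol{\Delta}\|_1$. For the switching term, I would apply the triangle inequality twice: $|\hat{x}_t - \hat{x}_{t-1}| - |x_t^\star - x_{t-1}^\star| \leq |\Delta_t - \Delta_{t-1}| \leq |\Delta_t| + |\Delta_{t-1}|$, summing to at most $2\beta \|\boldsymbol{\Delta}\|_1$ (boundary terms vanish since $\Delta_0 = \Delta_{T+1} = 0$). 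Adding these three bounds yields the desired $\tfrac{1}{2}\|\boldsymbol{\Delta}\|_1\bigl(p_{\max} - p_{\min} + 4\beta + 2\lambda\bigr)$, and the lemma follows from $\|\boldsymbol{\Delta}\|_1 \leq \DUS(\mathcal{U}, \hat{\mathbf{p}})$.

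The main obstacle I expect is getting the correct constant on the regularizer. A naive bound $|\hat{x}_t^2 - (x_t^\star)^2| = (\hat{x}_t + x_t^\star)|\Delta_t| \leq 2|\Delta_t|$ would give $2\lambda\|\boldsymbol{\Delta}\|_1$, which is off by a factor of two from the target. The trick will be to exploit $\sum_t \Delta_t = 0$ again: write
\[
\|\hat{\mathbf{x}}\|_2^2 - \|\mathbf{x}^\star\|_2^2 = \sum_t \Delta_t(\hat{x}_t + x_t^\star) = \sum_t \Delta_t\bigl(\hat{x}_t + x_t^\star - 1\bigr),
\]
and then observe that since each coordinate lies in $[0,1]$ we have $\hat{x}_t + x_t^\star \in [0,2]$, hence $|\hat{x}_t + x_t^\star - 1| \leq 1$. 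This centering around $1$ yields the tight bound $\lambda\|\boldsymbol{\Delta}\|_1$ on the regularizer and produces the exact constant $2\lambda$ inside the factor $(p_{\max}-p_{\min}+4\beta+2\lambda)$ claimed by the lemma.
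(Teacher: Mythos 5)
Your proof is correct and produces exactly the lemma's constant, following the same overall decomposition as the paper: bound $\|\hat{\mathbf{x}}-\mathbf{x}^\star\|_1$ by $\DUS(\mathcal{U},\hat{\mathbf{p}})$ (valid since $\mathbf{p}\in\mathcal{U}$ in the UQ-robust regime) and then control the signal, switching, and regularization terms separately by multiples of $\|\boldsymbol{\Delta}\|_1$. The differences are in how each term is handled, and in two of the three cases your route is cleaner. For the signal term, the paper splits indices into over- and under-purchase sets and uses $\sum_{\Delta_t>0}\Delta_t=\sum_{\Delta_t<0}(-\Delta_t)=\tfrac12\|\boldsymbol{\Delta}\|_1$; your re-centering at $c=\tfrac{p_{\min}+p_{\max}}{2}$ is algebraically equivalent and gives the same $\tfrac{p_{\max}-p_{\min}}{2}\|\boldsymbol{\Delta}\|_1$. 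For the switching term, the paper simply asserts a worst-case extra cost of $2\beta$ per unit of decision change, whereas your double triangle inequality with $\Delta_0=\Delta_{T+1}=0$ actually proves the $2\beta\|\boldsymbol{\Delta}\|_1$ bound. For the regularizer, the paper argues by pushing $\hat{\mathbf{x}}$ to a vertex of the simplex and $\mathbf{x}^\star$ toward its center (and in doing so conflates $\|\cdot\|_2$ with $\|\cdot\|_2^2$ notationally, though the computed bound $\lambda\,\DUS(\mathcal{U},\hat{\mathbf{p}})$ is for the squared norm and is what the proof needs); your identity $\|\hat{\mathbf{x}}\|_2^2-\|\mathbf{x}^\star\|_2^2=\sum_t\Delta_t(\hat{x}_t+x_t^\star-1)$ with $|\hat{x}_t+x_t^\star-1|\le 1$ reaches the same $\lambda\|\boldsymbol{\Delta}\|_1$ bound by a short, fully rigorous algebraic argument that avoids the extremal reasoning entirely. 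In short: same skeleton and same constants, but your treatment of the switching and regularization terms is tighter as written and would be a worthwhile substitute for the paper's.
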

\begin{proof}
Recall the definition of the decision uncertainty score:
\begin{align*}
    \DUS(\mathcal{U}, \hat{\mathbf{p}}) \coloneqq \max_{\mathbf{z} \in \mathcal{U}} \Vert \OPT(\hat{\mathbf{p}}) - \OPT(\mathbf{z})\Vert_1.
\end{align*}
Since $\mathbf{p} \in \mathcal{U}$ and thus the instance $\mathcal{I}$ lies within the uncertainty set, we know that:
\begin{align*}
    \Vert \OPT(\hat{\mathbf{p}}) - \OPT(\mathcal{I})\Vert_1 \leq \DUS(\mathcal{U}, \hat{\mathbf{p}}).
\end{align*}
Let $\mathbf{\hat{x}} = \OPT(\hat{\mathbf{p}})$ and $\mathbf{x}^* = \OPT(\mathcal{I})$.  Then, we have:
\begin{align*}
    \Vert \mathbf{\hat{x}} - \mathbf{x}^* \Vert_1 = \sum_{t=1}^T |\hat{x}_t - x^*_t| \leq \DUS(\mathcal{U}, \hat{\mathbf{p}}).
\end{align*}
First, we consider the difference in the purchasing cost (i.e., the first term in \eqref{eq:obj}).
There are two cases to consider: (i) $\hat{x}_t \geq x^*_t$ and (ii) $\hat{x}_t < x^*_t$.  
Observe that whenever $\hat{x}_t \geq x^*_t$, \ADV purchases \textit{more} than \OPT, and the worst-case is captured by $p_{\max}(\hat{x}_t - x^*_t)$, since the price $p_t$ can be as high as $p_{\max}$.  
Conversely, whenever $\hat{x}_t < x^*_t$, \ADV purchases \textit{less} than \OPT, and the worst-case is captured by $p_{\min}(x^*_t - \hat{x}_t)$, since the price $p_t$ can be as low as $p_{\min}$.  Thus, we have:
\begin{align*}
    \sum_{t=1}^T p_t \hat{x}_t - \sum_{t=1}^T p_t x^*_t &\leq \sum_{t \in \{\hat{x}_t \geq x^*_t\}} p_{\max} (\hat{x}_t - x^*_t) + \sum_{t \in \{\hat{x}_t < x^*_t\}} p_{\min} (x^*_t - \hat{x}_t) \\
    &\leq p_{\max} \sum_{t \in \{\hat{x}_t \geq x^*_t\}} (\hat{x}_t - x^*_t) + p_{\min} \sum_{t \in \{\hat{x}_t < x^*_t\}} (x^*_t - \hat{x}_t).
\end{align*}
Since we know that $\sum_{t=1}^T \hat{x}_t = 1$ and $\sum_{t=1}^T x^*_t = 1$, the increases in $x_t$'s and the decreases in $x_t$'s must balance out, i.e., we have:
\begin{align*}
    \sum_{t \in \{\hat{x}_t \geq x^*_t\}} (\hat{x}_t - x^*_t) = \sum_{t \in \{\hat{x}_t < x^*_t\}} (x^*_t - \hat{x}_t) = \frac{1}{2} \Vert \mathbf{\hat{x}} - \mathbf{x}^* \Vert_1.
\end{align*}
Thus, we have:
\begin{align*}
    \sum_{t=1}^T p_t \hat{x}_t - \sum_{t=1}^T p_t x^*_t \leq \left( p_{\max} - p_{\min} \right) \frac{\DUS(\mathcal{U}, \hat{\mathbf{p}})}{2}.
\end{align*}
Next, we consider the difference in the switching cost (i.e., the second term in \eqref{eq:obj}).  In the worst-case, the switching cost of \ADV can be higher than that of \OPT by $2\beta$ per unit change in the decisions.  Specifically, in the worst-case, whenever $\hat{x}_t > x^*_t$, \ADV incurs an extra switching cost of $2 \beta(\hat{x}_t - x^*_t)$ compared to \OPT, and whenever $\hat{x}_t < x^*_t$, \ADV incurs an extra switching cost of $2 \beta(x^*_t - \hat{x}_t)$ compared to \OPT.  In aggregate, we have:
\begin{align*}
    \sum_{t=1}^{T+1} \beta |\hat{x}_t - \hat{x}_{t-1}| - \sum_{t=1}^{T+1} \beta |x^*_t - x^*_{t-1}| \leq 2\beta \cdot \DUS(\mathcal{U}, \hat{\mathbf{p}}).
\end{align*}
Finally, we consider the difference in the regularization cost (i.e., the third term in \eqref{eq:obj}).  First, since $\Vert \mathbf{y} \Vert_2$ is a convex and symmetric function on the simplex (recall that the simplex is defined as $\{\mathbf{y} \in \mathbb{R}^T_+ : \sum_{t=1}^T y_t = 1\}$), maximizing the difference $\Vert \mathbf{\hat{x}} \Vert_2 - \Vert \mathbf{x}^* \Vert_2$ is equivalent to ``pushing'' $\mathbf{\hat{x}}$ to an extreme point of the simplex (i.e., closer to a unit vector) and, for that fixed $\mathbf{\hat{x}}$, minimize $\Vert \mathbf{x}^* \Vert_2$ by ``pushing'' $x^*$ towards the center of the simplex (i.e., the vector with all entries equal to $\frac{1}{T}$).  
Concretely, suppose that $\mathbf{\hat{x}}$ is a unit vector (i.e., $\hat{x}_j = 1$ for some $j$ and $\hat{x}_{t'} = 0$ for all $t' \neq j$).  The decision uncertainty score constraint (i.e., $\Vert \mathbf{\hat{x}} - \mathbf{x}^* \Vert_1 \leq \DUS(\mathcal{U}, \hat{\mathbf{p}})$) forces that $x^*_{j}$ must be at least $1 - \frac{\DUS(\mathcal{U}, \hat{\mathbf{p}})}{2}$, and thus the remaining $T-1$ entries of $\mathbf{x}^*$ must sum to at most $\DUS(\mathcal{U}, \hat{\mathbf{p}})$.  To minimize $\Vert \mathbf{x}^* \Vert_2$, we should set these remaining $T-1$ entries to be equal, i.e., $x^*_{t'} = \frac{\DUS(\mathcal{U}, \hat{\mathbf{p}})}{2(T-1)}$ for all $t' \neq j$.  Thus, we have:
\begin{align*}
    \max_{\hat{x}, x^*} \left( \Vert \mathbf{\hat{x}} \Vert_2 - \Vert \mathbf{x}^* \Vert_2 \right) &= \DUS(\mathcal{U}, \hat{\mathbf{p}}) - \frac{T}{4(T-1)}\DUS(\mathcal{U}, \hat{\mathbf{p}})^2,
\end{align*}
and since $\frac{T}{4(T-1)}\DUS(\mathcal{U}, \hat{\mathbf{p}})^2 \geq 0$, we have:
\begin{align*}
    \max_{\hat{x}, x^*} \left( \Vert \mathbf{\hat{x}} \Vert_2 - \Vert \mathbf{x}^* \Vert_2 \right) &\leq \DUS(\mathcal{U}, \hat{\mathbf{p}}),
\end{align*}
Hence, for any $\mathbf{\hat{x}}$ and $\mathbf{x}^*$ that satisfy the decision uncertainty score constraint, we have:
\begin{align*}
    \Vert \mathbf{\hat{x}} \Vert_2 - \Vert \mathbf{x}^* \Vert_2 \leq \DUS(\mathcal{U}, \hat{\mathbf{p}}).
\end{align*}
Multiplying by $\lambda$ yields that the difference in regularization cost is at most $\lambda \cdot \DUS(\mathcal{U}, \hat{\mathbf{p}})$.  

Combining the above three bounds, we have:
\begin{align*}
    \Cost(\ADV(\mathcal{I})) - \Cost(\OPT(\mathcal{I})) &\leq \frac{\DUS(\mathcal{U}, \hat{\mathbf{p}})}{2}(p_{\max} - p_{\min}) + 2\beta \cdot \DUS(\mathcal{U}, \hat{\mathbf{p}}) + \lambda \cdot \DUS(\mathcal{U}, \hat{\mathbf{p}}) \\
    &= \frac{\DUS(\mathcal{U}, \hat{\mathbf{p}})}{2}(p_{\max} - p_{\min} + 4\beta + 2\lambda).
\end{align*}
Rearranging yields the desired result.
\end{proof}

From the proofs of \autoref{thm:uq-consistency} and \autoref{thm:uq-robustness}, we know that the cost of \UQAdv is a convex combination of the cost of \ADV and the cost of \RORO, weighted by $\gamma$ and $1-\gamma$, respectively.  Specifically, we have the following, where we let $r \coloneqq \Vert \mathbf{\hat{x}} \Vert_2^2$ denote the regularization cost paid by \ADV for brevity.
\begin{align*}
    \Cost(\UQAdv(\mathcal{I})) &= \sum_{t=1}^{T} p_t x_t + \sum_{t=1}^{T+1} \beta \vert x_t - x_{t-1} \vert + \lambda \Vert \mathbf{x} \Vert_2^2,\\
    &= \sum_{t=1}^{T} p_t(\gamma \hat x_t + (1-\gamma) \tilde x_t) + \sum_{t=1}^{T+1} \beta|\gamma \hat x_t + (1-\gamma) \tilde x_t - \gamma \hat x_{t-1} - (1-\gamma) \tilde x_{t-1}| + \lambda \Vert \gamma \hat{\mathbf{x}} + (1-\gamma) \tilde{\mathbf{x}} \Vert_2^2, \\
    &\leq \gamma \sum_{t=1}^{T} p_t \hat x_t + (1-\gamma) \sum_{t=1}^{T} p_t \tilde x_t + \sum_{t=1}^{T+1} \beta|\gamma (\hat x_t - \hat x_{t-1}) + (1-\gamma) (\tilde x_t - \tilde x_{t-1})| + \lambda (\gamma^2 r + (1-\gamma)^2 \Vert \tilde{\mathbf{x}} \Vert_2^2 + 2\gamma(1-\gamma) \langle \hat{\mathbf{x}}, \tilde{\mathbf{x}} \rangle) \\
\end{align*}
By triangle inequality, the switching cost term is upper bounded as follows:
\begin{align*}
    \Cost(\UQAdv(\mathcal{I})) &\leq \gamma \sum_{t=1}^{T} p_t \hat x_t + (1-\gamma) \sum_{t=1}^T p_t \tilde x_t + \gamma \sum_{t=1}^{T+1} \beta|\hat x_t + \hat x_{t-1}| + (1-\gamma)\sum \beta | \tilde x_t - \tilde x_{t-1}| + \lambda (\gamma^2 r + (1-\gamma)^2 \Vert \tilde{\mathbf{x}} \Vert_2^2 + 2\gamma(1-\gamma) \langle \hat{\mathbf{x}}, \tilde{\mathbf{x}} \rangle) \\
\end{align*}
Using Cauchy-Schwarz on the final term, the above simplifies to:
\begin{align*}
    \Cost(\UQAdv(\mathcal{I})) &\leq \gamma \sum_{t=1}^{T} p_t \hat x_t + (1-\gamma) \sum_{t=1}^T p_t \tilde x_t + \gamma \sum_{t=1}^{T+1} \beta|\hat x_t + \hat x_{t-1}| + (1-\gamma)\sum \beta | \tilde x_t - \tilde x_{t-1}| + \lambda (\gamma \sqrt{r} + 1 - \gamma)^2 \\
\end{align*}
With a slight abuse of notation, let $\ADV_{\text{bought}}(\mathcal{I})$, $\ADV_{\text{switch}}(\mathcal{I})$, and $\ADV_{\text{reg}}(\mathcal{I}) \coloneqq r \cdot \lambda$ denote the cost of \ADV due to the buying, switching, and regularization costs, respectively.  Similarly, let $\RORO_{\text{bought}}(\mathcal{I})$, $\RORO_{\text{switch}}(\mathcal{I})$, and $\RORO_{\text{reg}}(\mathcal{I}) \coloneqq 1 \cdot \lambda$ denote the (worst-case) cost of \RORO due to the buying, switching, and regularization costs, respectively.  Then, we have:
\begin{align*}
    \Cost(\UQAdv(\mathcal{I})) &\leq \gamma \ADV_{\text{bought}}(\mathcal{I}) + (1-\gamma) \RORO_{\text{bought}}(\mathcal{I}) + \gamma \ADV_{\text{switch}}(\mathcal{I}) + (1-\gamma)\RORO_{\text{switch}}(\mathcal{I}) + \lambda (\gamma \sqrt{r} + (1-\gamma) 1)^2 \\
\end{align*}
By Jensen's inequality, we have:
$$\left( \gamma \sqrt{r} + (1-\gamma) \sqrt{1} \right)^2 \leq \gamma r + (1-\gamma) 1.$$
Combining this with the above gives:
\begin{align*}
    \Cost(\UQAdv(\mathcal{I})) &\leq \gamma (\ADV_{\text{bought}}(\mathcal{I}) + \ADV_{\text{switch}}(\mathcal{I}) + \ADV_{\text{reg}}(\mathcal{I})) + (1-\gamma) (\RORO_{\text{bought}}(\mathcal{I}) + \RORO_{\text{switch}}(\mathcal{I}) + \RORO_{\text{reg}}(\mathcal{I})) \\
    &\leq \gamma \cdot \Cost(\ADV(\mathcal{I})) + (1-\gamma) \cdot \Cost(\RORO(\mathcal{I})).
\end{align*}
By \sref{Lemma}{lem:adv-opt-gap}, we have:
\begin{align*}
    \Cost(\UQAdv(\mathcal{I})) &\leq \gamma \cdot \left( \Cost(\OPT(\mathcal{I})) + \frac{\DUS(\mathcal{U}, \hat{\mathbf{p}})}{2}(p_{\max} - p_{\min} + 4\beta + 2\lambda) \right) + (1-\gamma) \cdot \alpha \cdot \Cost(\OPT(\mathcal{I})), \\
    &= \left( \gamma + (1-\gamma) \alpha \right) \cdot \Cost(\OPT(\mathcal{I})) + \gamma \cdot \frac{\DUS(\mathcal{U}, \hat{\mathbf{p}})}{2}(p_{\max} - p_{\min} + 4\beta + 2\lambda). 
\end{align*}
Note that in the worst case, the second term contributes maximally to the UQ robustness ratio when $\Cost(\OPT(\mathcal{I}))$ is minimized, i.e., when $\Cost(\OPT(\mathcal{I})) = p_{\min} + \frac{\lambda}{T}$.  Thus, we have:
\begin{align*}
    \Cost(\UQAdv(\mathcal{I})) &\leq \left( \gamma + (1-\gamma) \alpha \right) \cdot \Cost(\OPT(\mathcal{I})) + \gamma \cdot \frac{\DUS(\mathcal{U}, \hat{\mathbf{p}})}{2} \left( \frac{p_{\max} - p_{\min} + 4\beta + 2\lambda}{p_{\min} + \frac{\lambda}{T}} \right) \cdot \Cost(\OPT(\mathcal{I})). 
\end{align*}
In terms of \DUS, this is equivalently stated as:
\begin{align}
    \Cost(\UQAdv(\mathcal{I})) &\leq \left( 1 + \left( \frac{\DUS(\mathcal{U}, \hat{\mathbf{p}})}{2} \right) ( \alpha -1 ) \right)  \Cost(\OPT(\mathcal{I})) + \left[ \frac{\DUS(\mathcal{U}, \hat{\mathbf{p}})}{2} - \frac{\DUS(\mathcal{U}, \hat{\mathbf{p}})^2}{4} \right]  \left( \frac{p_{\max} - p_{\min} + 4\beta + 2\lambda}{p_{\min} + \frac{\lambda}{T}} \right)  \Cost(\OPT(\mathcal{I})). 
\end{align}

Dividing both sides by $\Cost(\OPT(\mathcal{I}))$ yields:
\begin{align*}
    \frac{\Cost(\UQAdv(\mathcal{I}))}{\Cost(\OPT(\mathcal{I}))} &\leq  1 + \left( \frac{\DUS(\mathcal{U}, \hat{\mathbf{p}})}{2} \right) \left( \alpha - 1 + \left( 1 - \frac{\DUS(\mathcal{U}, \hat{\mathbf{p}})}{2}\right) \frac{p_{\max} - p_{\min} + 4\beta + 2\lambda}{p_{\min} + \frac{\lambda}{T}} \right). \\
\end{align*}
This completes the proof.

\end{proof}

\end{document}